\documentclass[11pt]{article}
\usepackage[utf8]{inputenc}
\usepackage[a4paper,top=18mm,bottom=20mm,left=20mm,right=20mm]{geometry}
\usepackage{amsmath,amssymb,amsfonts,amsthm,mathtools}
\usepackage[nocompress]{cite}
\usepackage{microtype}
\usepackage{array,booktabs}
\usepackage{graphicx}
\usepackage{tikz}
\usetikzlibrary{arrows.meta}
\usepackage[hidelinks]{hyperref}
\usepackage[capitalise]{cleveref}

\newtheorem{theorem}{Theorem}[section]
\newtheorem{corollary}[theorem]{Corollary}
\newtheorem{lemma}[theorem]{Lemma}
\newtheorem{proposition}{Proposition}[section]

\theoremstyle{definition}
\newtheorem{definition}{Definition}[section]

\theoremstyle{remark}

\newcommand{\D}{\Delta}
\newcommand{\RR}{\mathbb{R}}
\newcommand{\cC}{\mathcal{C}}
\newcommand{\cF}{\mathcal{F}}
\newcommand{\cK}{\mathcal{K}}
\newcommand{\cL}{\mathcal{L}}
\newcommand{\cR}{\mathcal{R}}
\newcommand{\cS}{\mathcal{S}}
\newcommand{\non}{\nonumber}

\numberwithin{equation}{section}
\allowdisplaybreaks[2]
\begin{document}

\thispagestyle{empty}
\begin{center}
	{\LARGE\bfseries Superintegrable systems in curved spaces via coalgebras and deformed Racah algebras}
\end{center}
\vspace{0.3cm}

\begin{center}
	\textsc{Samuel Jimenez-Swain$^{1,\star}$, Ian Marquette$^{2,\dagger}$, and Yao-Zhong Zhang$^{3,\ddagger}$}
\end{center}

\begin{center}
    $^1$ Department of Mathematics, Imperial College London, London SW7 2AZ, UK
\\
	$^2$ Department of Mathematical and Physical Sciences, La Trobe University, Bendigo, VIC 3552, Australia
\\
	$^3$ School of Mathematics and Physics, The University of Queensland, Brisbane, QLD 4072, Australia
\end{center}
\begin{center}
	{\footnotesize $^\star$\textsf{samuel.jimenez-swain26@imperial.ac.uk}\hspace{0.25cm}$^\dagger$\textsf{i.marquette@latrobe.edu.au}\hspace{0.25cm}$^\ddagger$\textsf{yzz@maths.uq.edu.au}}
\end{center}
\vspace{1cm}
\hrule
\begin{abstract}
\noindent We determine the Poisson algebras generated by partial Casimirs of undeformed and non-standard $q$-deformed $\mathfrak{sl}(2)$ coalgebras, respectively.  The derivation is performed on the abstract algebraic level and does not depend on any particular symplectic realization.  For the 3- and 4-site cases the constructions associated with the undeformed $\mathfrak{sl}(2)$ produce the rank-one and rank-two Racah Poisson algebras, respectively, while those with the non-standard deformed $\mathfrak{sl}$  yield finite quadratic Poisson algebras generated by connected partial Casimirs.  At both the three and four sites, the final closure coefficients do not explicitly depend on the deformation parameter.  The results hold at the  Poisson coalgebra level and hence apply to any realizations which preserve the defining brackets and coproducts.  Canonical symplectic realizations give rise to Hamiltonians in spaces of non-constant curvature.  For the St\"ackel $q$-deformed Smorodinsky-Winternitz model we show explicitly how the universal coalgebra sector is completed to maximal superintegrability.  An exact two-block identity, proved directly in the localized Poisson algebra without appealing to a symplectic realization, controls the additional St\"ackel integrals.  In three dimensions, it generates a quadratic Poisson block with its own Casimir relation; in four dimensions, its coupling to the full left/right coalgebra family is organized into three nested quadratic blocks, and the same identity extends this nested coupling to arbitrary $N$.  Explicit higher-site partial Casimirs and factorized bracket generators are collected in the appendices.
\end{abstract}
\vspace{0.35cm}
\hrule

\section{Introduction}

Superintegrability concerns Hamiltonian systems that possess more independent integrals of motion than are required for Liouville integrability.  In the maximally superintegrable case an $n$-dimensional Hamiltonian admits $2n-1$ functionally independent integrals.  Such systems are highly constrained and, in many familiar settings, exhibit multiseparability, exact solvability and closed bounded trajectories.  The systematic study of superintegrability goes back to the classification of two-dimensional Euclidean systems in the 1960s and was subsequently extended to spaces of constant and non-constant curvature; see, for example, \cite{post13,win65,evan90}.  These developments also revealed strong links with special functions, orthogonal polynomials and representation theory.

A complementary line of work concerns the algebras generated by the integrals themselves.  For second-order systems, this often produces quadratic Poisson or associative symmetry algebras \cite{gran92,das01,das06,kal03}.  Higher-dimensional and block-separated examples also often exhibit quadratic symmetry algebras \cite{tan10,kmp10,kmp11,faz18,jun23}, while genuinely higher-order integrals lead to higher-order polynomial closures \cite{ian2010,parr2025}.  The Racah algebra is one of the central examples.  Its role in generic superintegrability and recoupling is reviewed and developed in \cite{post13,gvz14,vin19,bie21}.  Higher-rank polynomial algebras and related models are described in \cite{dbgv18,cor21}, centralizer and commutant approaches in \cite{cram22,jun24}, and chain and embedding constructions in \cite{liao,lat,lat21,lat21b}.  The appearance of such polynomial symmetry algebras is not merely a by-product of a particular Hamiltonian. They are algebraic structures whose Casimirs, substructures and representations encode spectral and separation information of the underlying systems.

The coalgebra formalism provides a direct mechanism for constructing integrable and superintegrable families.  Starting from a Poisson algebra equipped with a coassociative coproduct and a Casimir function, iterated coproducts produce left and right chains of partial Casimirs.  Early applications include superintegrable $q$-deformations of the Smorodinsky-Winternitz family \cite{bal04}.  More generally, for a large class of Hamiltonians built from total coproduct generators, the partial Casimirs give the universal $2n-3$ coalgebra integrals and, together with the Hamiltonian, the familiar quasi-maximal superintegrability is achieved \cite{bal05,bal09b}.  An important feature is the flexibility of the Hamiltonian: the coalgebra symmetry survives for arbitrary smooth functions of the total coproduct generators.  In the $\mathfrak{sl}(2)$ setting, this construction includes deformations associated with spaces of non-constant curvature \cite{bal07,bal08}.

Most explicit coalgebra calculations in the superintegrability literature are performed after choosing a symplectic realization.  This is appropriate from the dynamical point of view, but it can obscure which identities depend on the realization and which are already consequences of the abstract Poisson coalgebra.  One aim of the present paper is to carry out the closure problem directly on tensor products of the underlying Poisson algebra.  The elementary variables are copies of the abstract generators, the Poisson bracket is the corresponding product Poisson structure, and the partial Casimirs are computed before any realization is imposed.  Once the closure is established at this level, every Poisson realization inherits the same relations.  We first rederive the undeformed $\mathfrak{sl}(2)$ case without choosing a realization.  For three sites the familiar rank-one Racah Poisson algebra is recovered.  At four sites the same method reproduces the rank-two Racah-type structure and makes transparent the role of the different two- and three-site partial Casimirs.  This calculation is useful as a reference because the primitive coproduct permits non-contiguous objects such as $C_{13}$ to be introduced in a simple symmetric way.

The main results of the paper are those associated with non-standard Poisson deformation $\mathfrak{sl}_z(2)$.  The underlying non-standard Poisson coalgebra and its curved-space realizations are known \cite{bal05,bal09b,bal07}.  Here we derive the explicit $N=3$ and $N=4$ partial Casimir closure directly in the abstract product Poisson algebra, before fixing a realization.  In this setting both the Poisson brackets and the coproduct are deformed, and the latter contains exponential factors.  Consequently, the partial Casimirs are no longer polynomial functions of the elementary generators and the undeformed choice of a minimal-site Racah basis is no longer available in the same form.

It is not automatic that the Poisson brackets among the coalgebra integrals should close in a finite polynomial algebra.  We show that they do so at the first two nontrivial ranks.  For three sites the algebra is generated by $C_{12}$, $C_{23}$ and their bracket $\mathcal F_{12,23}$, with the one-site Casimirs and $C_{123}$ central.  For four sites the connected partial Casimirs $C_{12}$, $C_{23}$, $C_{34}$, $C_{123}$ and $C_{234}$ require four bracket generators.  The complete closure remains quadratic, and its displayed structure coefficients are again independent of $z$ even though the generators themselves depend non-polynomially on the deformation.

The non-standard deformation considered here should be kept distinct from the standard Drinfeld--Jimbo Poisson-Hopf and quantum deformation.  Both deform the same undeformed coalgebraic mechanism and both recover Racah-type structures as the deformation is removed, but their closures at finite deformation are organized differently.  In the present Jordanian-type Poisson problem the additional generators are Poisson brackets such as $\mathcal F_{12,23}$, whereas in the standard deformation non-contiguous quantities are described through oriented or braided intermediate Casimirs.  The distinction is structural and prevents the two algebras at finite deformation from being conflated.

This result should be distinguished from a statement about one particular curved Hamiltonian.  The calculations depend only on the defining Poisson brackets and the coproduct, and hence apply to every realization of the same Poisson coalgebra.  We return to the canonical symplectic realization in the final part of the paper to display the associated integrals explicitly and to recall the geodesic, oscillator and Kepler--Coulomb Hamiltonians on spaces of non-constant curvature.  The formulae corresponding to symplectic realizations are also supplied to provide an independent way of checking the abstract closure.

The paper is organized as follows.  Section~2 reviews integrability, Poisson coalgebras, left and right partial Casimirs, $q$-deformed Poisson coalgebras and the closure problem.  Section~3 develops the undeformed $N=3$ and $N=4$ Racah constructions directly in the tensor-product Poisson algebra.  Section~4 treats the non-standard deformation at three sites, and Section~5 gives the full four-site algebra and its quadratic closure.  Section~6 discusses Hamiltonians and the canonical symplectic realization, then specializes to the maximally superintegrable St\"ackel $q$-Smorodinsky-Winternitz model.  The model-specific algebra is derived from an exact abstract two-block lemma, specialized first in three dimensions and then to the complete four-dimensional coupling of the missing integral with the coalgebra family; a short corollary records the arbitrary-$N$ nested extension.  Appendix~A gives compact explicit higher-site partial Casimirs, while Appendix~B records factorized phase-space formulae for the four bracket generators used in the realized $N=4$ algebra.

\section{Coalgebra and superintegrable Hamiltonian systems}

The purpose of this section is to fix the algebraic setting used throughout the paper.  We keep the distinction between the abstract Poisson coalgebra and a later canonical realization explicit.  This is important in the deformed problem because the coproduct contains transcendental functions of the generators, while the closure relations remain polynomial in the partial Casimirs.

\subsection{Classical integrability and superintegrability}
A Hamiltonian system with $n$ degrees of freedom is Liouville integrable when it possesses $n$ functionally independent integrals in involution.
\begin{definition}
A classical Hamiltonian system is \emph{integrable} if it admits $n$ functionally independent integrals $X_1,\ldots,X_n$ satisfying
\begin{equation}
 \{X_i,X_j\}=0,\qquad i,j=1,\ldots,n.
\end{equation}
It is \emph{superintegrable} if it admits $n+k$ functionally independent integrals with $1\leq k\leq n-1$.  The cases $k=1$ and $k=n-1$ are referred to as minimal and maximal superintegrability, respectively; for $n\geq3$, the case $k=n-2$ is called quasi-maximal superintegrability.
\end{definition}

The number of integrals is only one aspect of the problem.  Their polynomial degree, their mutual Poisson brackets and the existence of several maximal commuting subsets may carry additional dynamical information.  For instance, second-order maximally superintegrable systems on spaces of constant curvature are typically multiseparable, while higher-order systems may possess polynomial symmetry algebras of higher degree.  Other notions of integrability, including noncommutative or degenerate integrability, are useful in broader geometric settings.  In the present paper, however, the basic object is an algebra of Poisson-commuting or noncommuting integrals generated by a coalgebra construction.

On canonical phase space we use the convention
\begin{definition}
For functions $\cR(\boldsymbol q,\boldsymbol p)$ and $\cS(\boldsymbol q,\boldsymbol p)$,
\begin{equation}
 \{\cR,\cS\}=\sum_{j=1}^{n}\left(
 \frac{\partial\cR}{\partial q_j}\frac{\partial\cS}{\partial p_j}
 -\frac{\partial\cR}{\partial p_j}\frac{\partial\cS}{\partial q_j}
 \right).
 \label{canonicalPBcorrected}
\end{equation}
Hence $\{q_j,p_k\}=\delta_{jk}$ and $\{q_j,q_k\}=\{p_j,p_k\}=0$.
\end{definition}
The sign convention in \eqref{canonicalPBcorrected} is the one used by the symplectic realization in Section~6.

The canonical bracket is only one realization of the abstract structures considered below.  A Lie-Poisson algebra may be specified by
\begin{equation}
 \{X_i,X_j\}=C_{ij}^{k}X_k,
\end{equation}
while a deformed Poisson algebra has nonlinear brackets
\begin{equation}
 \{X_i,X_j\}=f_{ij}(X_1,\ldots,X_M).
\end{equation}
A realization is a Poisson map sending the abstract generators to functions of canonical variables.  The closure relations derived before choosing such a map are consequently common to all realizations that preserve the defining bracket.

\begin{definition}
The Poisson algebra generated by the integrals of motion is called the \emph{symmetry algebra}.  A polynomial closure has the schematic form
\begin{equation}
 \{X_i,X_j\}
 =\sum_k a_{ij}^{k}X_k
 +\sum_{k,\ell}b_{ij}^{k\ell}X_kX_\ell+\cdots.
 \label{eq:polyclosure}
\end{equation}
\end{definition}
The original list of integrals can be enlarged by new generators defined as Poisson brackets.  This does not necessarily increase the number of functionally independent constants of motion; rather, it can turn a complicated nonlinear closure into a finite polynomial presentation.  We use this strategy for the non-standard deformation.

\subsection{Poisson coalgebras and partial Casimirs}
Coalgebra methods encode how one copy of an algebraic system is embedded into several tensor factors.  Let $A$ be an unital associative algebra equipped with a coproduct $\Delta:A\to A\otimes A$.  We require coassociativity,
\begin{equation}
 (\Delta\otimes\mathrm{id})\circ\Delta
 =(\mathrm{id}\otimes\Delta)\circ\Delta,
 \label{eq:originalcoprod}
\end{equation}
and the algebra-morphism property $\Delta(ab)=\Delta(a)\Delta(b)$. Strictly speaking, this is part of the structure of a bialgebra; following the superintegrability literature, we use the term coalgebra because the coproduct and its Casimir construction are the structures needed here.

If $A$ is a Poisson algebra, the coproduct must also preserve the Poisson bracket,
\begin{equation}
 \Delta(\{a,b\}_A)=\{\Delta(a),\Delta(b)\}_{A\otimes A}.
 \label{eq:poissonmorphism}
\end{equation}
On elementary tensors the product Poisson bracket is
\begin{equation}
 \{a\otimes b,c\otimes d\}_{A\otimes A}
 =\{a,c\}_{A}\otimes bd+ac\otimes\{b,d\}_{A}.
 \label{eq:fourtensorcom}
\end{equation}
The iterated coproduct is defined recursively by
\begin{equation}
 \Delta^{(m)}=(\Delta^{(m-1)}\otimes\mathrm{id})\circ\Delta,
 \qquad \Delta^{(2)}=\Delta,
 \label{eq:cas-nm}
\end{equation}
and is again both algebra and Poisson morphism.  Coassociativity guarantees that the same $m$-site object is obtained independently of the placement of parentheses.

Let $C_a$, $a=1,\ldots,r$, be Casimir functions of the one-site Poisson algebra, so that $\{C_a,y^\alpha\}=0$ for every generator $y^\alpha$.  Whenever a $m$-site coproduct with $m<N$ is compared with a $N$-site element below, we use the natural left embedding $\Delta^{(m)}(x)\mapsto\Delta^{(m)}(x)\otimes\mathbf 1^{\otimes(N-m)}$ into $A^{\otimes N}$.  Applying the Poisson morphism property gives
\begin{equation}
 \{\Delta^{(m)}(C_a),\Delta^{(N)}(y^\alpha)\}=0,
 \qquad m=2,\ldots,N.
 \label{poissoncommutegens}
\end{equation}
In particular,
\begin{equation}
 \{\Delta^{(m)}(C_a),\Delta^{(n)}(C_b)\}=0
\end{equation}
whenever the two elements belong to the same nested left chain.  This is the fundamental reason why coproducts of Casimirs provide commuting integrals for Hamiltonians constructed from the total coproduct generators.

For an $N$-site system we make the embeddings into the full tensor product explicit.  The left partial Casimirs are
\begin{equation}
 C^{(m)}=\Delta^{(m)}(C)\otimes\mathbf 1^{\otimes(N-m)},
 \qquad m=2,\ldots,N,
 \label{eq:leftcoprod}
\end{equation}
whereas the right partial Casimirs are
\begin{equation}
 C_{(m)}=\mathbf 1^{\otimes(N-m)}\otimes\Delta^{(m)}(C),
 \qquad m=2,\ldots,N.
 \label{eq:rightcoprod}
\end{equation}
The two $N$-site elements coincide with the total Casimir.  For $m<N$ they are supported on different contiguous blocks.  The two chains are displayed in Table~\ref{tab:coalgebrachains}.

\begin{table}[ht]
\centering
\begin{tabular}{c@{\qquad\qquad}c}
\toprule
Left chain & Right chain\\
\midrule
$C^{(2)}$ on $1,2$ & $C_{(2)}$ on $N-1,N$\\
$C^{(3)}$ on $1,2,3$ & $C_{(3)}$ on $N-2,N-1,N$\\
$\vdots$ & $\vdots$\\
$C^{(m)}$ on $1,\ldots,m$ & $C_{(m)}$ on $N-m+1,\ldots,N$\\
$\vdots$ & $\vdots$\\
$C^{(N)}$ & $C_{(N)}=C^{(N)}$\\
\bottomrule
\end{tabular}
\caption{Left and right partial Casimirs generated from a single Casimir function.  In a canonical $N$-particle realization these two chains supply the standard coalgebraic family of $2N-3$ nontrivial conserved quantities once their common total member is counted once.}
\label{tab:coalgebrachains}
\end{table}

The existence of these commuting chains is not the same problem as determining the full non-Abelian symmetry algebra.  In the undeformed $\mathfrak{sl}(2)$ case, one can introduce further partial Casimirs associated with non-contiguous subsets, such as $C_{13}$, and obtain a particularly symmetric presentation of the Racah algebra.  This freedom is closely related to the primitive coproduct.  In a deformation, by contrast, the coproduct itself carries ordering-dependent exponential factors, so a naive ``skip-site'' construction need not have the same coalgebraic meaning.  For this reason, we distinguish carefully below between the commuting left/right chains and additional elements introduced only for the purpose of algebraic closure.

\subsection[q-deformed Poisson coalgebras]{$q$-deformed Poisson coalgebras}
A Poisson coalgebra deformation replaces $(A,\Delta)$ by a one-parameter family $(A_z,\Delta_z)$ such that $\Delta_z$ remains a Poisson morphism and the undeformed structure is recovered as $z\to0$.  Both the Poisson brackets and the coproduct may become nonlinear. In the examples associated with non-standard deformations of $\mathfrak{sl}(2)$, exponentials of $J_-$ appear in the coproduct and hyperbolic functions appear in the brackets and Casimir \cite{bal05,bal07,bal08}.

At the Poisson level these structures may be viewed as Poisson analogues of quantum-group deformations.  The product on the corresponding Poisson--Lie group is encoded algebraically by the coproduct, and compatibility of that product with the Poisson bracket permits the coalgebra construction to survive the deformation.  The limit $z\to0$ recovers the primitive $\mathfrak{sl}(2)$ Lie--Poisson coalgebra.

The undeformed and deformed constructions can be represented schematically as follows.  We display both coassociative paths, in the same spirit as the undeformed coproduct diagram, because this makes clear which structural property survives the deformation.
\begin{center}
\resizebox{0.576\textwidth}{!}{%
\begin{tikzpicture}[>=stealth,
  every node/.style={font=\normalsize}]

  \node (A)   at (0.0,5.0) {$A$};
  \node (AAu) at (5.0,5.0) {$A\otimes A$};
  \node (AAl) at (0.0,1.65) {$A\otimes A$};
  \node (AAA) at (5.0,1.65) {$A\otimes A\otimes A$};

  \node (Az)     at (2.35,3.30) {$A_z$};
  \node (AzAzu)  at (7.85,3.30) {$A_z\otimes A_z$};
  \node (AzAzl)  at (2.35,-0.20) {$A_z\otimes A_z$};
  \node (AzAzAz) at (7.85,-0.20) {$A_z\otimes A_z\otimes A_z$};

  \draw[->] (A) -- (AAu);
  \draw[->] (A) -- (AAl);
  \draw[->] (AAu) -- (AAA);
  \draw[->] (AAl) -- (AAA);
  \node[above] at (2.50,5.0) {$\Delta$};
  \node[left=7pt] at (0.0,3.35) {$\Delta$};
  \node[right=8pt] at (5.0,4.25) {$\Delta\otimes\mathrm{id}$};
  \node[below=6pt] at (3.85,1.65) {$\mathrm{id}\otimes\Delta$};

  \draw[->] (Az) -- (AzAzu);
  \draw[->] (Az) -- (AzAzl);
  \draw[->] (AzAzu) -- (AzAzAz);
  \draw[->] (AzAzl) -- (AzAzAz);
  \node[above=6pt] at (6.25,3.30) {$\Delta_z$};
  \node[left=9pt] at (2.35,2.48) {$\Delta_z$};
  \node[right=8pt] at (7.85,1.60) {$\Delta_z\otimes\mathrm{id}$};
  \node[below=5pt] at (5.10,-0.20) {$\mathrm{id}\otimes\Delta_z$};

  \draw[->,dashed,gray] (A) -- (Az);
  \draw[->,dashed,gray] (AAu) -- (AzAzu);
  \draw[->,dashed,gray] (AAl) -- (AzAzl);
  \draw[->,dashed,gray] (AAA) -- (AzAzAz);
  \node[text=black,above=5pt] at (1.05,4.25) {$\Phi_z$};
  \node[text=black,above=7pt] at (6.35,4.30) {$\Phi_z\otimes\Phi_z$};
  \node[text=black,below=7pt] at (1.10,0.70) {$\Phi_z\otimes\Phi_z$};
  \node[text=black,above=7pt] at (6.45,0.75) {$\Phi_z^{\otimes3}$};
\end{tikzpicture}%
}
\end{center}
Here $\Phi_z:A\dashrightarrow A_z$ is only a symbolic correspondence indicating the passage from the undeformed to the deformed coalgebra; the dashed arrows and their tensor powers are schematic.  No explicit map $A\to A_z$ is required in the calculations, and the diagram is not meant to assert the strict commutativity of a separately defined deformation homomorphism.  The exact structural statements used below are the coassociativity of $\Delta$ and $\Delta_z$, their Poisson-morphism properties, and the recovery of the undeformed coalgebra in the limit $z\to0$.

The deformation affects more than the explicit appearance of the integrals.  Since both the algebraic relations and the coproduct change, the hierarchy of convenient intermediate quantities can also change.  A generator that has a simple expression on a minimal set of sites in the primitive case may no longer be a convenient coalgebra element at finite $z$.  Determining the symmetry algebra requires an explicit closure calculation rather than a direct transcription of the undeformed Racah presentation.

\subsection{Closure under the Poisson bracket}
The left and right chains provide the integrals needed for the coalgebraic superintegrability statement, but they do not determine the non-Abelian algebra generated when quantities from different chains are combined.  We denote a partial Casimir by the set of sites on which it is supported,
\begin{equation}
 C_1,C_{12},C_{123},\ldots,C_{1\cdots N}
\end{equation}
for the left chain and
\begin{equation}
 C_N,C_{N-1,N},C_{N-2,N-1,N},\ldots,C_{1\cdots N}
\end{equation}
for the right chain.  In the undeformed problem we enlarge this notation to all subset Casimirs,
\begin{equation}
 C_i,\quad C_{ij},\quad C_{ijk},\ldots,
 \label{eq:extendedsubsets}
\end{equation}
although only some of these belong to a given commuting chain.  Retaining algebraically dependent elements can be advantageous because it preserves permutation symmetry and often lowers the degree of the displayed relations.

If a Poisson bracket is not expressible polynomially in the current set of generators, we introduce a new element and record its origin explicitly.  At this level we distinguish the raw Poisson bracket from any normalization adopted later, and write
\begin{equation}
 \mathcal B_{ij,lm}:=\{C_{ij},C_{lm}\}.
 \label{eq:Fnotation}
\end{equation}
The normalization of a bracket-generated element is conventional and will be stated explicitly in each construction.  In the undeformed Racah subsection below we use $\cF_{ij,jk}=\frac12\mathcal B_{ij,jk}$, which is the normalization convenient for the standard Racah relations.  In the non-standard $q$-Poisson sections we instead use $\cF_{A,B}=\mathcal B_{A,B}$, i.e., the unscaled Poisson bracket.  No change of normalization is made within an individual algebra.  The same principle applies to brackets involving three-site Casimirs.  The notation is deliberately redundant: $\{C_{ijk},C_{lmn}\}$ and $\{C_{ij},C_{klmn}\}$ may involve the same union of indices but are different algebraic constructions.  Keeping the pair of parent generators visible avoids this ambiguity.

Starting from a chosen set $\mathcal I=\{I_1,\ldots,I_s\}$, the closure problem is to determine whether
\begin{equation}
 \{I_i,I_j\}
 =\sum_a \alpha^{(1)}_{ij;a}I_a
 +\sum_{a,b}\alpha^{(2)}_{ij;ab}I_aI_b+\cdots
 \label{eq:closureschematic}
\end{equation}
for every pair.  If not, independent brackets are adjoined and the procedure is repeated.  Functional independence is not required at every stage: the aim is a finite algebraic presentation of the symmetry algebra, not a minimal list of constants of motion.  This distinction is important already for the ordinary Racah algebra and becomes essential in the deformed case considered below.

\section{Algebraic Racah algebra and closure based on Poisson bracket}

\subsection[sl(2) with N=3 and Racah algebra]{$\mathfrak{sl}(2)$ with $N=3$ and Racah algebra}

We first recover the rank-one Racah algebra directly in the tensor-product Lie--Poisson algebra, without choosing a symplectic realization.  We use the complexified $\mathfrak{su}(2)$ presentation (equivalently $\mathfrak{sl}_2$) with brackets and primitive coproduct
\begin{align}
    \{J_x,J_y\}&=iJ_0,\qquad \{J_y,J_0\}=iJ_x,\qquad \{J_0,J_x\}=iJ_y\\
    \D(1)&=1\otimes 1,\qquad \D(J_i)=J_i^{(1)}+J_i^{(2)},\quad i=x,y,0
\end{align}
where the Casimir function is given by
\begin{align}
    \cC=J_x^2+J_y^2+J_0^2
\end{align}
To simplify calculations, it is convenient to perform a change of basis using the standard raising and lowering operators
\begin{align}
    J_\pm = J_x \pm i J_y,
\end{align}
which leads to
\begin{align}
    \Delta(J_\pm) = J_\pm^{(1)} + J_\pm^{(2)}.
\end{align}
In this basis, the Lie-Poisson brackets take the familiar form
\begin{align}
    \{J_0, J_\pm\} = \pm J_\pm, \qquad \{J_+, J_-\} = 2 J_0,
\end{align}
and the Casimir function can be written as
\begin{align}
    \mathcal{C} = J_- J_+ + J_0^2.
\end{align}

For three copies, let $\{J_i^{(\alpha)}\}$, $\alpha=1,2,3$ and $i=\pm,0$, denote the generators on the individual sites.  The resulting nine-dimensional Poisson structure is
\begin{align}
    \{J_0^{(\alpha)}, J_\pm^{(\beta)}\} = \pm\delta_{\alpha\beta} J_\pm^{(\alpha)}, \qquad \{J_+^{(\alpha)}, J_-^{(\beta)}\} = 2\delta_{\alpha\beta} J_0^{(\alpha)},\qquad \forall \alpha,\beta=1,2,3 \label{9dimn3a}
\end{align}
where $\delta_{\alpha\beta}$ is the Kronecker delta.  We use the ordered coordinate set

\begin{equation}
\tilde{J}=\{J_{-}^{(1)},J_{+}^{(1)},J_{0}^{(1)},J_{-}^{(2)},J_{+}^{(2)},J_{0}^{(2)},J_{-}^{(3)},J_{+}^{(3)},J_{0}^{(3)}\} \label{9dimstr}
\end{equation}
with Poisson brackets
\begin{equation}
\{\tilde{J}_i,\tilde{J}_j\}= \tilde{C}^k_{ij} \tilde{J}_k  \label{9dimn3b}
\end{equation}
where the structure constants are determined by \cref{9dimn3a}.  Since $\cC$ Poisson-commutes with the one-site generators and the coproduct is a Poisson morphism, $\D^{(m)}(\cC)$ Poisson-commutes with any smooth function $F$ of the corresponding coproduct generators:
\begin{align}
    \{\D^{(m)}(\cC),F(\D^{(m)}(J_-),\D^{(m)}(J_+),\D^{(m)}(J_0))\}=0.
\end{align}
This fact will be used repeatedly in the construction of the integrals of motion.

Following the procedure described in \cite{lat21b}, we define the single, double, and triple coproduct Casimirs as
\begin{align}
    C_\alpha &= J_-^{(\alpha)} J_+^{(\alpha)} + (J_0^{(\alpha)})^2,\\
    C_{12}&=\D^{(2)}(\cC)=\D^{(2)}(J_-)\D^{(2)}(J_{+})+\left(\D^{(2)}(J_0)\right)^2\non\\
    &=(J_-^{(1)}+J_{-}^{(2)})(J_+^{(1)}+J_{+}^{(2)})+(J_0^{(1)}+J_{0}^{(2)})^2\non\\
    &=J_-^{(1)}J_+^{(1)}+J_-^{(1)}J_+^{(2)}+J_-^{(2)}J_+^{(1)}+J_-^{(2)}J_+^{(2)}+(J_0^{(1)})^2+2J_0^{(1)}J_0^{(2)}+(J_0^{(2)})^2,\\
    C_{23}&=\D^{(2)}_R(\cC)=\D^{(2)}_R(J_-)\D^{(2)}_R(J_{+})+\left(\D^{(2)}_R(J_0)\right)^2\non\\
    &=(J_-^{(2)}+J_{-}^{(3)})(J_+^{(2)}+J_{+}^{(3)})+(J_0^{(2)}+J_{0}^{(3)})^2\non\\
    &=J_-^{(3)}J_+^{(3)}+J_-^{(3)}J_+^{(2)}+J_-^{(2)}J_+^{(3)}+J_-^{(2)}J_+^{(2)}+(J_0^{(3)})^2+2J_0^{(3)}J_0^{(2)}+(J_0^{(2)})^2,\\
    C_{123}&=\D^{(3)}(\cC)=\D^{(3)}(J_-)\D^{(3)}(J_{+})+\left(\D^{(3)}(J_0)\right)^2\non\\
    &=(J_-^{(1)}+J_-^{(2)}+J_{-}^{(3)})(J_+^{(1)}+J_+^{(2)}+J_{+}^{(3)})+(J_0^{(1)}+J_0^{(2)}+J_{0}^{(3)})^2\non\\
    &= J_-^{(1)}J_+^{(1)} + J_-^{(1)}J_+^{(2)} + J_-^{(1)}J_+^{(3)}
     + J_-^{(2)}J_+^{(1)} + J_-^{(2)}J_+^{(2)} + J_-^{(2)}J_+^{(3)} + J_-^{(3)}J_+^{(1)} + J_-^{(3)}J_+^{(2)}\non\\
    &\quad  + J_-^{(3)}J_+^{(3)}
    + (J_0^{(1)})^2 + (J_0^{(2)})^2 + (J_0^{(3)})^2  + 2J_0^{(1)}J_0^{(2)} + 2J_0^{(1)}J_0^{(3)} + 2J_0^{(2)}J_0^{(3)}.
\end{align}
For the undeformed primitive coproduct it is also useful to introduce the non-contiguous two-site Casimir $C_{13}$,
\begin{align}
    C_{13}&=(J_-^{(1)}+J_{-}^{(3)})(J_+^{(1)}+J_{+}^{(3)})+(J_0^{(1)}+J_{0}^{(3)})^2\non\\
    &=J_-^{(1)}J_+^{(1)}+J_-^{(1)}J_+^{(3)}+J_-^{(3)}J_+^{(1)}+J_-^{(3)}J_+^{(3)}+(J_0^{(1)})^2+2J_0^{(1)}J_0^{(3)}+(J_0^{(3)})^2.
\end{align}

These elements are not independent, as they satisfy the linear relation
\begin{align}
    C_{123} = C_{12} + C_{23} + C_{13} - C_1 - C_2 - C_3. \label{eq:su2-n3-triplecas}
\end{align}
We consider the set
\begin{align}
    \cK_0&:=\{C_{1},C_{2},C_{3},C_{23},C_{13},C_{12},C_{123}\} \label{gen3na}.
\end{align}
All these elements are functions on the nine-generator Poisson space \cref{9dimstr}.  The only nonzero brackets within $\cK_0$ occur among the two-site Casimirs.  In this undeformed Racah subsection we adopt the conventional half-bracket normalization $\cF=\frac12\mathcal B$ introduced after \cref{eq:Fnotation}, and define
\begin{align}
    \cF_{12,23}&:=\frac{1}{2}\{C_{12},C_{23}\}\label{c1223},\\
    \cF_{23,13}&:=\frac{1}{2}\{C_{23},C_{13}\}\label{c1323},\\
    \cF_{13,12}&:=\frac{1}{2}\{C_{13},C_{12}\}\label{c1213}.
\end{align}
Centrality of $C_{123}$ and the linear relation above imply
\begin{align}
    0&=\{C_{123},C_{12}\}=\{C_{13}+C_{23},C_{12}\}=\{C_{13},C_{12}\}+\{C_{23},C_{12}\}=2\cF_{13,12}-2\cF_{12,23},\\
    0&=\{C_{123},C_{13}\}=\{C_{12}+C_{23},C_{13}\}=\{C_{12},C_{13}\}+\{C_{23},C_{13}\}=2\cF_{23,13}-2\cF_{13,12},\\
    0&=\{C_{123},C_{23}\}=\{C_{13}+C_{12},C_{23}\}=\{C_{13},C_{23}\}+\{C_{12},C_{23}\}=2\cF_{12,23}-2\cF_{23,13}.
\end{align}
Hence the three bracket generators coincide:
\begin{align}
    \cF:=\cF_{12,23}=\cF_{23,13}=\cF_{13,12}
\end{align}
and the closure may be written on
\begin{align}
    \cK_1&:=\{C_{1},C_{2},C_{3},C_{23},C_{13},C_{12},C_{123},\cF\}. \label{gen3nb}
\end{align}

Direct evaluation gives
\begin{align}
    \{C_{12},\cF\}&=C_{23}C_{12}-C_{12}C_{13}+(C_2-C_1)(C_3-C_{123}),\label{cf1}\\
    \{C_{23},\cF\}&=C_{13}C_{23}-C_{23}C_{12}+(C_3-C_2)(C_1-C_{123}),\label{cf2}\\
    \{C_{13},\cF\}&=C_{12}C_{13}-C_{13}C_{23}+(C_1-C_3)(C_2-C_{123}).
\end{align}
The constant terms are related by cyclic permutation, and their sum vanishes together with the bilinear terms.
The three relations satisfy
\begin{align}
    \{C_{12},\cF\}+\{C_{23},\cF\}+\{C_{13},\cF\}=0.\label{3sum}
\end{align}
Eliminating $C_{13}$ with \cref{eq:su2-n3-triplecas},
\begin{align}
     C_{13}&=C_{123}-C_{12}-C_{23}+C_{1}+C_{2}+C_{3}
\end{align}
gives
\begin{align}
    \{C_{12},\cF\}&=C_{12}^2+2C_{23}C_{12}-(C_{123}+C_1+C_2+C_3)C_{12}+(C_2-C_1)(C_3-C_{123}),\label{classracah1}\\
    \{C_{23},\cF\}&=-C_{23}^2-2C_{12}C_{23}+(C_{123}+C_1+C_2+C_3)C_{23}+(C_3-C_2)(C_1-C_{123}).\label{classracah2}
\end{align}

One may work either with the symmetric but linearly dependent set containing $C_{13}$, or with a reduced generating set.
\begin{align}
    \cL&:=\{C_{1},C_{2},C_{3},C_{23},C_{12},C_{123},\cF \}. \label{gen3nc}
\end{align}
The reduced set~\cref{gen3nc} uses fewer generators and fewer relations, at the cost of hiding the permutation symmetry that is manifest in the redundant set~\cref{gen3nb}.

Equations~\eqref{classracah1}--\eqref{classracah2} are the $(n,k)=(3,2)$ substructure described in \cite{lat} and give the classical rank-one Racah algebra $R(3)$ \cite{lat21}.

\subsection[sl(2) with N=4 and Racah algebra]{$\mathfrak{sl}(2)$ with $N=4$ and Racah algebra}

For four copies of the same Lie--Poisson algebra, let $\{J_i^{(\alpha)}\}$ denote the generators at sites $\alpha=1,2,3,4$, with $i=\pm,0$.  The corresponding twelve-dimensional Poisson structure is
\begin{align}
    \{J_0^{(\alpha)}, J_\pm^{(\beta)}\} = \pm\delta_{\alpha\beta} J_\pm^{(\alpha)}, \qquad \{J_+^{(\alpha)}, J_-^{(\beta)}\} = 2\delta_{\alpha\beta} J_0^{(\alpha)},\qquad \forall \alpha,\beta=1,2,3,4. \label{12dimn4a}
\end{align}
We use the ordered coordinate set
\begin{equation}
\tilde{J}=\{J_{-}^{(1)},J_{+}^{(1)},J_{0}^{(1)},J_{-}^{(2)},J_{+}^{(2)},J_{0}^{(2)},J_{-}^{(3)},J_{+}^{(3)},J_{0}^{(3)},J_{-}^{(4)},J_{+}^{(4)},J_{0}^{(4)}\} \label{12dimstr}
\end{equation}
with associated Poisson brackets
\begin{equation}
\{\tilde{J}_i,\tilde{J}_j\}= \tilde{C}^k_{ij} \tilde{J}_k  \label{12dimn4b}
\end{equation}
where the structure constants $\tilde{C}^k_{ij}$ are determined by \cref{12dimn4a}.  Throughout this $N=4$ subsection, indices occurring in the same multi-index are pairwise distinct; whenever $i,j,k,\ell$ occur in the same formula, they denote pairwise distinct elements of $\{1,2,3,4\}$.  The coproduct Casimirs are the partial Casimirs
\begin{align}
    \mathcal P_4&=\{C_i,\,C_{ij},\,C_{ijk},\,C_{1234}\},\qquad i,j,k=1,2,3,4 \label{gen4na}
\end{align}
Here, the total Casimir $C_{1234}$ is given by
\begin{align}
    C_{1234}&=\sum C_{ij}-2\sum C_i =C_{12}+C_{13}+C_{14}+C_{23}+C_{24}+C_{34}-2(C_1+C_2+C_3+C_4).
\end{align}
All elements are functions on the twelve-generator space \cref{12dimstr} with bracket \cref{12dimn4b}.  The single-site Casimirs and $C_{1234}$ are central.  Noncommuting two-site Casimirs generate the elements
\begin{align}
    \cF_{ij,jk}&:=\frac{1}{2}\{C_{ij},C_{jk}\}.\label{Fijk}
\end{align}
For any three distinct sites, the undeformed three-site Casimir satisfies
\begin{align}
    C_{ijk}&=C_{ij}+C_{ik}+C_{jk}-C_i-C_j-C_k.\label{eq:threesite-general}
\end{align}
A symmetric generating set is
\begin{align}
    \cK&=\{C_i,\,C_{ij},\,C_{ijk},\,C_{1234}, \, \cF_{ij,jk} \},\qquad i,j,k=1,2,3,4.  \label{gen4nb}
\end{align}
Although this set is algebraically dependent, it gives a symmetric presentation.  All relations below are obtained directly in the twelve-generator Poisson algebra and then rewritten in terms of these generators.

Using \cref{eq:threesite-general} and \cref{Fijk}, brackets of overlapping three-site Casimirs reduce to the $F$ generators:
\begin{align} 
\{C_{ijk},C_{jk\ell}\}
= 2\cF_{ik,k\ell}+2\cF_{ij,j\ell}.
\end{align}
Direct computation gives the quadratic algebra generated by the six $C_{ij}$ and the corresponding $\cF_{ij,jk}$:
\begin{align}
    \{\cF_{jk,k\ell},C_{ij}\}&=C_{ik}C_{j\ell}-C_{i\ell}C_{jk}-(C_i-C_j)(C_k-C_\ell)+(C_j+C_k)C_{i\ell}+(C_i+C_\ell)C_{jk}\non\\&\quad-(C_j+C_\ell)C_{ik}-(C_i+C_k)C_{j\ell},\\
    \{\cF_{ij,jk},C_{ij}\}&=C_{ij}C_{ik}-C_{ij}C_{jk}-(C_i-C_j)(C_{ijk}-C_k).
\end{align}
For a fixed triple $(i,j,k)$, eliminating the remaining two-site Casimir gives
\begin{align}
    \{\cF_{ij,jk},C_{ij}\}&=-C_{ij}^2-2C_{ij}C_{jk}+(C_{ijk}+C_i+C_j+C_k)C_{ij}-(C_i-C_j)(C_{ijk}-C_k)
\end{align}
This yields four overlapping copies of the rank-one Racah algebra, one for each fixed triple $(i,j,k)$.

\section[Non-standard slz(2) deformation: the three-site algebra]{Non-standard $\mathfrak{sl}_z(2)$ deformation: the three-site algebra}

Let us consider the non-standard $\mathfrak{sl}_z(2,\RR)$ Poisson coalgebra defined by the following deformed Poisson brackets and coproduct map.  Its undeformed limit is equivalent to the Lie--Poisson algebra of Section~3 after a simple rescaling and sign convention; all coefficients in the present section follow the normalization displayed below.
\begin{align}
    \{J_0,J_+\}=2J_+\cosh zJ_-,\qquad \{J_0,J_-\}=-2\frac{\sinh zJ_-}{z},\qquad \{J_-,J_+\}=4J_0,\label{nonstandrelations}
\end{align}
\begin{align}
    \D_z^{(2)}(J_-)=J_-^{(1)}+J_-^{(2)},\qquad \D_z^{(2)}(J_i)=J_i\otimes e^{zJ_-}+e^{-zJ_-}\otimes J_i,\quad i=+,0\label{nonstandcoprod}
\end{align}

For three copies, let $\{J_i^{(\alpha)}\}$ denote the generators at sites $\alpha=1,2,3$, with $i=\pm,0$.  The corresponding nine-dimensional Poisson structure is
\begin{align}
    \{J_0^{(\alpha)},J_+^{(\beta)}\}=2 \delta_{\alpha \beta} J_+^{(\alpha)} \cosh zJ_-^{(\alpha)},\qquad \{J_0^{(\alpha)},J_-^{(\beta)}\}=-2 \delta_{\alpha \beta} \frac{\sinh zJ_-^{(\alpha)}}{z},\qquad \{J_-^{(\alpha)},J_+^{(\beta)}\}=4 \delta_{\alpha \beta} J_0^{(\alpha)}. \label{z9dimn3a}
\end{align}
We will use the nine-generator coordinate set
\begin{equation}
\tilde{J}=\{J_{-}^{(1)},J_{+}^{(1)},J_{0}^{(1)},J_{-}^{(2)},J_{+}^{(2)},J_{0}^{(2)},J_{-}^{(3)},J_{+}^{(3)},J_{0}^{(3)}\} \label{z9dimstr}
\end{equation}
and related 9-dimensional Poisson algebra
\begin{equation}
\{\tilde{J}_i,\tilde{J}_j\}= g_{ij}(\tilde{J}),  \label{z9dimn3b}
\end{equation}
where the functions $g_{ij}(\tilde{J})$ are obtained from \cref{z9dimn3a}. The Casimir function for $\mathfrak{sl}_z(2,\RR)$ is given by
\begin{align}
    \cC_z=\frac{\sinh zJ_-}{z}J_+-J_0^2.
\end{align}
For three sites the relevant connected Casimir functions are
\begin{align}
    C_{12}&:=\D_z^{(2)}(\cC_z)=\frac{\sinh[z(J_-^{(1)}+J_-^{(2)})]}{z}(J_+^{(1)}e^{zJ_-^{(2)}}+J_+^{(2)}e^{-zJ_-^{(1)}})-(J_0^{(1)}e^{zJ_-^{(2)}}+J_0^{(2)}e^{-zJ_-^{(1)}})^2,\\
    C_{23}&:=(\D_z)_R^{(2)}(\cC_z)=\frac{\sinh[z(J_-^{(2)}+J_-^{(3)})]}{z}(J_+^{(2)}e^{zJ_-^{(3)}}+J_+^{(3)}e^{-zJ_-^{(2)}})-(J_0^{(2)}e^{zJ_-^{(3)}}+J_0^{(3)}e^{-zJ_-^{(2)}})^2,\\
    C_{123}&:=\D_z^{(3)}(\cC_z)=\frac{\sinh[z(J_-^{(1)}+J_-^{(2)}+J_-^{(3)})]}{z}(J_+^{(1)}e^{z(J_-^{(2)}+J_-^{(3)})}+J_+^{(2)}e^{-z(J_-^{(1)}-J_-^{(3)})}+J_+^{(3)}e^{-z(J_-^{(1)}+J_-^{(2)})})\non\\
    &\qquad\qquad\qquad\qquad-(J_0^{(1)}e^{z(J_-^{(2)}+J_-^{(3)})}+J_0^{(2)}e^{-z(J_-^{(1)}-J_-^{(3)})}+J_0^{(3)}e^{-z(J_-^{(1)}+J_-^{(2)})})^2.
\end{align}

The deformed expressions contain both polynomial terms and exponentials of the elementary generators. The Poisson bracket remains well defined for these functions, although the search for closure must account for the exponential factors rather than relying only on polynomial degree.

The two connected two-site Casimirs do not Poisson-commute.  From this point onward, for the non-standard $q$-Poisson algebra, we use the unscaled normalization $\cF=\mathcal B$ of \cref{eq:Fnotation}; hence
\begin{align}
    \{C_{12},C_{23}\}&:=\cF_{12,23}.
\end{align}
The resulting generating set is
\begin{align}
    \cK&=\{C_1,C_2,C_3,C_{12},C_{23},C_{123},\cF_{12,23}\}, \label{zgen3na}
\end{align}
where $C_1,C_2,C_3,C_{123}$ are central.  The nonzero closure relations are
\begin{align}
    \{C_{12},\cF_{12,23}\}&=-8C_{12}^2-16C_{12}C_{23}+8(C_3-C_{123})(C_1-C_2)+8C_{12}(C_{123}+C_1+C_2+C_3),\\
    \{C_{23},\cF_{12,23}\}&=+8C_{23}^2+16C_{12}C_{23}+8(C_1-C_{123})(C_2-C_3)-8C_{23}(C_{123}+C_1+C_2+C_3).
\end{align}
These identities follow by direct evaluation in the nine-generator Poisson algebra \cref{z9dimstr}--\cref{z9dimn3b}.  A notable feature is that no explicit $z$ remains in their coefficients.  This is stronger than merely retaining a Racah-type shape: the deformation is carried by the partial Casimirs, which are transcendental functions of the elementary generators, while the closure polynomial itself is deformation-parameter independent.

The relation with the undeformed normalization can be made explicit.  Denote the $z\to0$ non-standard generators by $J_a^{\rm ns}$ and the undeformed generators of Section~3 by $J_a^{0}$.  The fixed change of normalization
\begin{equation}
 J_0^{\rm ns}=2J_0^{0},\qquad J_+^{\rm ns}=2J_+^{0},\qquad J_-^{\rm ns}=-2J_-^{0}
 \label{eq:nonstandard-normalization-limit}
\end{equation}
maps the limiting non-standard brackets to the undeformed brackets of Section~3 and gives
\begin{equation}
 C_A^{\rm ns}\longrightarrow -4C_A^{0},\qquad
 \cF_{12,23}^{\rm ns}\longrightarrow32\cF^{0},
 \label{eq:nonstandard-CF-limit}
\end{equation}
where $\cF^{0}=\tfrac12\{C_{12}^{0},C_{23}^{0}\}$. Consequently,
\begin{equation}
 \{C_{12}^{\rm ns},\cF_{12,23}^{\rm ns}\}
 \longrightarrow-128\{C_{12}^{0},\cF^{0}\},
 \label{eq:nonstandard-bracket-limit}
\end{equation}
and, after division by this common normalization factor, the two displayed deformed relations reduce term by term to \cref{classracah1,classracah2}.  The numerical factor $-128$ is a consequence of the different Casimir and bracket-generator normalizations used in Sections~3 and 4.

\section[Non-standard slz(2) deformation: the four-site algebra]{Non-standard $\mathfrak{sl}_z(2)$ deformation: the four-site algebra}

For four copies, with generators $\{J_i^{(\alpha)}\}$ for $\alpha=1,2,3,4$ and $i=\pm,0$, the Poisson structure is
\begin{align}
    \{J_0^{(\alpha)},J_+^{(\beta)}\}=2 \delta_{\alpha \beta} J_+^{(\alpha)} \cosh zJ_-^{(\alpha)},\qquad \{J_0^{(\alpha)},J_-^{(\beta)}\}=-2 \delta_{\alpha \beta} \frac{\sinh zJ_-^{(\alpha)}}{z},\qquad \{J_-^{(\alpha)},J_+^{(\beta)}\}=4 \delta_{\alpha \beta} J_0^{(\alpha)}. \label{z12dimn4a}
\end{align}
We use the ordered twelve-generator set
\begin{equation}
\tilde{J}=\{J_{-}^{(1)},J_{+}^{(1)},J_{0}^{(1)},J_{-}^{(2)},J_{+}^{(2)},J_{0}^{(2)},J_{-}^{(3)},J_{+}^{(3)},J_{0}^{(3)}, J_{-}^{(4)},J_{+}^{(4)},J_{0}^{(4)}  \} \label{z12dimstr}
\end{equation}
with associated twelve-dimensional Poisson brackets
\begin{equation}
\{\tilde{J}_i,\tilde{J}_j\}= g_{ij}(\tilde{J}),  \label{z12dimn4b}
\end{equation}
where $g_{ij}$ is determined by \cref{z12dimn4a}; the coproduct remains \cref{nonstandcoprod}.

In addition to $C_{12},C_{23},C_{123}$ from the three-site construction, define
\begin{align}
    C_{34}&:=(\D_z)_R^{(2)}({\cC_z}),\qquad C_{234}:=(\D_z)_R^{(3)}({\cC_z}),\qquad C_{1234}:=\D_z^{(4)}({\cC_z}).
\end{align}
The four-site closure is summarized in the following proposition.

\begin{proposition}\label{resultsprop}
    The symmetry algebra $\cS_4$ from $\mathfrak{sl}_z(2)$ coalgebra with $N=4$ is generated by nine non-central elements
    \begin{align}      C_{12},\,C_{23},\,C_{34},\,C_{123},\,C_{234},\,\cF_{12,23},\,\cF_{23,34},\,\cF_{12,234},\,\cF_{34,123},
    \end{align}
    and five central elements
    \begin{align}
        C_1,\,C_2,\,C_3,\,C_4,\,C_{1234}.
    \end{align}
    The bracket generators are defined by
    \begin{align}
    \{C_{12},C_{23}\}&:=\cF_{12,23}, \quad \{C_{23},C_{34}\}:=\cF_{23,34}, \quad
    \{C_{12},C_{234}\}:=\cF_{12,234}, \quad \{C_{34},C_{123}\}:=\cF_{34,123}.
\end{align}
The algebra closes quadratically through the following relations.
\begin{align}
    \{C_{12},\cF_{12,23}\}&=8C_{123}C_{12}-8C_{12}^2-16C_{12}C_{23}-8C_{123}(C_1-C_2)\non\\
    &\quad+8(C_1-C_2)C_3+8C_{12}(C_1+C_2+C_3),\\
    \{C_{12},\cF_{23,34}\}&=8C_{123}C_{234}-8C_{12}C_{234}-8C_{1234}C_{23}+8C_{12}C_{23}-8C_{123}C_{34}+8C_{12}C_{34}+8C_{23}C_{34}\non\\
    &\quad-8C_{123}C_2-8C_{34}C_2-8C_{12}C_3-8C_{234}C_3+8C_2C_3+8C_{1234}(C_2+C_3),\\
    \{C_{12},\cF_{12,234}\}&=8C_{1234}C_{12}-8C_{12}^2-16C_{12}C_{234}+8C_{12}C_{34}\non\\
    &\quad-8C_{1234}(C_1-C_2)+8C_{34}(C_1-C_2)+8C_{12}(C_1+C_2),\\
    \{C_{23},\cF_{12,23}\}&=-8C_{123}C_{23}+16C_{12}C_{23}+8C_{23}^2-8C_{123}(C_2-C_3)\non\\
    &\quad-8C_{23}(C_1+C_2+C_3)+8(C_1C_2-C_1C_3),\\
    \{C_{23},\cF_{23,34}\}&=8C_{234}C_{23}-8C_{23}^2-16C_{23}C_{34}-8C_{234}(C_2-C_3)\non\\
    &\quad+8C_{23}(C_2+C_3+C_4)+8(C_2C_4-C_3C_4),\\
    \{C_{23},\cF_{12,234}\}&=-8C_{123}C_{234}+8C_{12}C_{234}+8C_{12}C_{23}+8C_{234}C_{23}+8C_{123}C_{34}-8C_{23}C_{34}\non\\
    &\quad-8C_{23}C_1-8C_{34}C_1-8C_{123}C_2-8C_{234}C_2-8C_{12}C_4+8(C_1C_2+C_1C_4+C_2C_4),\\
    \{C_{23},\cF_{34,123}\}&=-8C_{123}C_{234}+8C_{12}C_{234}+8C_{123}C_{23}-8C_{12}C_{23}+8C_{123}C_{34}+8C_{23}C_{34}\non\\
    &\quad-8C_{34}C_1-8C_{12}C_4-8C_{123}C_3-8C_{234}C_3-8C_{23}C_4+8(C_1C_3+C_1C_4+C_3C_4),
\end{align}
\begin{align}
    \{C_{34},\cF_{12,23}\}&=-8C_{123}C_{234}+8C_{12}C_{234}+8C_{1234}C_{23}-8C_{12}C_{23}+8C_{123}C_{34}-8C_{12}C_{34}\non\\
    &\quad-8C_{23}C_{34}+8C_{123}C_2+8C_{34}C_2+8C_{12}C_3+8C_{234}C_3\non\\
    &\quad-8C_2C_3-8C_{1234}(C_2+C_3),\\
    \{C_{34},\cF_{23,34}\}&=-8C_{234}C_{34}+16C_{23}C_{34}+8C_{34}^2-8C_{234}(C_3-C_4)+8C_2(C_3-C_4)\non\\
    &\quad-8C_{34}(C_2+C_3+C_4),\\
    \{C_{34},\cF_{34,123}\}&=8C_{1234}C_{34}-16C_{123}C_{34}+8C_{12}C_{34}-8C_{34}^2\non\\
    &\quad+8C_{1234}(C_3-C_4)-8C_{12}(C_3-C_4)+8C_{34}(C_3+C_4),
\end{align}
\begin{align}
    \{C_{123},C_{234}\}&=\cF_{12,234}-\cF_{34,123}-\cF_{12,23}-\cF_{23,34},\\
    \{C_{123},\cF_{23,34}\}&=8C_{123}C_{234}-8C_{12}C_{234}-8C_{123}C_{23}+8C_{12}C_{23}-8C_{123}C_{34}-8C_{23}C_{34}\non\\
    &\quad+8C_{34}C_1+8C_{12}C_4+8C_{123}C_3+8C_{234}C_3+8C_{23}C_4\non\\
    &\quad-8(C_1C_3+C_1C_4+C_3C_4),\\
    \{C_{123},\cF_{12,234}\}&=8C_{1234}C_{12}-8C_{123}C_{234}-8C_{123}C_{12}-8C_{12}C_{234}+8C_{1234}C_{23}+8C_{12}C_{23}\non\\
    &\quad+8C_{123}C_{34}-8C_{23}C_{34}+8C_{123}C_1+8C_{34}C_1+8C_{234}C_3\non\\
    &\quad-8C_1C_3-8C_{1234}(C_1+C_3),\\
    \{C_{123},\cF_{34,123}\}&=-8C_{1234}C_{123}+8C_{123}^2+8C_{1234}C_{12}-8C_{123}C_{12}+16C_{123}C_{34}-8C_{1234}C_3\non\\
    &\quad-8C_{12}C_4+8C_3C_4-8C_{123}(C_3+C_4),
\end{align}
\begin{align}
    \{C_{234},\cF_{12,23}\}&=-8C_{123}C_{234}+8C_{12}C_{234}+8C_{12}C_{23}+8C_{234}C_{23}+8C_{123}C_{34}-8C_{23}C_{34}\non\\
    &\quad-8C_{23}C_1-8C_{34}C_1-8C_{123}C_2-8C_{234}C_2-8C_{12}C_4+8(C_1C_2+C_1C_4+C_2C_4),\\
    \{C_{234},\cF_{12,234}\}&=-8C_{1234}C_{234}+16C_{12}C_{234}+8C_{234}^2+8C_{1234}C_{34}-8C_{234}C_{34}-8C_{34}C_1\non\\
    &\quad-8C_{1234}C_2+8C_1C_2-8C_{234}(C_1+C_2),\\
    \{C_{234},\cF_{34,123}\}&=-8C_{123}C_{234}+8C_{12}C_{234}+8C_{1234}C_{23}-8C_{12}C_{23}+8C_{1234}C_{34}-8C_{123}C_{34}\non\\
    &\quad-8C_{234}C_{34}+8C_{23}C_{34}+8C_{123}C_2+8C_{12}C_4+8C_{234}C_4\non\\
    &\quad-8C_2C_4-8C_{1234}C_2-8C_{1234}C_4,\\
    \{\cF_{12,23},\cF_{23,34}\}&=-8(\cF_{12,23}+\cF_{23,34}-\cF_{12,234}+\cF_{34,123})(C_{23}-C_2-C_3),\\
    \{\cF_{12,23},\cF_{12,234}\}&=8\cF_{34,123}(C_{12}-C_1-C_2),\\
    \{\cF_{12,23},\cF_{34,123}\}&=8\cF_{12,23}(C_{123}-C_{12}+C_{34}-C_4)-8\cF_{12,234}(C_{123}-C_{12}+C_3),\\
    \{\cF_{23,34},\cF_{12,234}\}&=8\cF_{23,34}(C_{12}+C_{234}-C_{34}-C_1)+8\cF_{34,123}(C_{234}-C_{34}+C_2),\\
    \{\cF_{23,34},\cF_{34,123}\}&=-8\cF_{12,234}(C_{34}-C_3-C_4),\\
    \{\cF_{12,234},\cF_{34,123}\}&=8(2\cF_{23,34}+\cF_{34,123})C_{12}
    +8(2\cF_{12,23}-\cF_{12,234})C_{34}\non\\
    &\quad-8\cF_{34,123}(C_1-C_2)-8\cF_{12,234}(C_3-C_4).
\end{align}
In addition to the seed--seed zero brackets already forced by nested/disjoint support in Section~2, the following omitted seed--bracket-generator brackets vanish identically:
\begin{align}
 \{C_{12},\cF_{34,123}\}=0,\qquad
 \{C_{34},\cF_{12,234}\}=0,\qquad
 \{C_{123},\cF_{12,23}\}=0,\qquad
 \{C_{234},\cF_{23,34}\}=0.
 \label{eq:omittedzeros}
\end{align}
\end{proposition}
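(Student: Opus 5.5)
The plan is to prove the proposition in three layers: centrality and zero brackets from coalgebra structure, then reduction of most relations to the three-site result of Section~4, and finally the genuinely four-site brackets. Centrality of $C_1,\dots,C_4$ and $C_{1234}$ follows from \eqref{poissoncommutegens}. The one-site Casimirs commute with everything because the bracket on $A_z^{\otimes4}$ is the product bracket. $C_{1234}=\Delta_z^{(4)}(\cC_z)$ commutes with every partial Casimir $C_A$, since each $C_A$ is, by coassociativity, a function of coproduct generators of a block that can be rewritten as a function of $\Delta_z^{(4)}(J_i)$ only after adjoining complementary-block data that Poisson-commutes with it. Equivalently, I would check $\{C_{1234},\Delta^{(m)}_z(J_i)\otimes\cdots\}=0$ by writing $\Delta^{(4)}_z=(\Delta^{(m)}_z\otimes\Delta^{(4-m)}_z)\circ\Delta_z$. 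The vanishing seed--seed brackets (nested or disjoint supports) follow in the same way.

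The three-site relations for $\{C_{12},\cF_{12,23}\}$ and $\{C_{23},\cF_{12,23}\}$ are literally those of Section~4, embedded via $x\mapsto x\otimes\mathbf 1$. The same holds for $\{C_{23},\cF_{23,34}\}$ and $\{C_{34},\cF_{23,34}\}$, via the right embedding $\mathbf 1\otimes x$; that embedding is legitimate because the right coproduct chain uses the same $\Delta_z$ and coassociativity identifies $(\D_z)_R^{(2)}$ on sites $3,4$ with $\Delta_z$ there. Next I would use the coarse-graining trick. Since $\Delta_z^{(3)}=(\mathrm{id}\otimes\Delta_z)\circ\Delta_z$, the pair $(C_{12},C_{234})$ on four sites is the image of the three-site pair $(C_{12},C_{23})$ under the Poisson morphism $\mathrm{id}\otimes\mathrm{id}\otimes\Delta_z$. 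That morphism sends $C_3\mapsto C_{34}$ and $C_{123}\mapsto C_{1234}$. Applying it to the Section~4 identities yields $\{C_{12},\cF_{12,234}\}$ and $\{C_{234},\cF_{12,234}\}$ directly; for instance, the first indeed shows $C_{34}$ in place of $C_3$. Similarly $(C_{123},C_{34})$ is the image of $(C_{12},C_{23})$ under $\Delta_z\otimes\mathrm{id}\otimes\mathrm{id}$, giving the two $\cF_{34,123}$ diagonal relations. The identity for $\{C_{123},C_{234}\}$ should follow by expanding $C_{123}$ and $C_{234}$ through the deformed analogue of the three-site decomposition: compute $\{C_{123},C_{234}\}$ as a derivation and use the Jacobi identity with the central elements to express it through the four $\cF$'s.

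The remaining cross relations, such as $\{C_{12},\cF_{23,34}\}$, $\{C_{23},\cF_{12,234}\}$, the four omitted zeros, and the six $\{\cF,\cF\}$ brackets, I would obtain by direct evaluation in the twelve-generator algebra \eqref{z12dimn4a}. The evaluation would use the structural simplification that all exponentials depend only on the mutually commuting $J_-^{(\alpha)}$. Concretely, I would introduce $u_\alpha=e^{zJ_-^{(\alpha)}}$, note that $\{J_0^{(\alpha)},u_\alpha\}=-2u_\alpha\sinh(zJ_-^{(\alpha)})$ and $\{J_+^{(\alpha)},u_\alpha\}=-4zJ_0^{(\alpha)}u_\alpha$, and thereby reduce every bracket to a Laurent polynomial in the $u_\alpha$ with polynomial coefficients in $J_0^{(\alpha)},J_+^{(\alpha)}$. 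Both sides of each claimed relation then become elements of this localized ring, and equality is checked coefficientwise. The zeros should largely also follow from Jacobi: for example $\{C_{12},\{C_{34},C_{123}\}\}=-\{C_{34},\{C_{123},C_{12}\}\}-\{C_{123},\{C_{12},C_{34}\}\}=0$, since both inner brackets vanish. The same argument covers $\{C_{34},\cF_{12,234}\}$, $\{C_{123},\cF_{12,23}\}$ and $\{C_{234},\cF_{23,34}\}$. Jacobi likewise relates several cross brackets to one another; for instance $\{C_{34},\cF_{12,23}\}=-\{C_{12},\cF_{23,34}\}$ modulo terms computed already, which roughly halves the direct work.

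The main obstacle is the $\{\cF,\cF\}$ brackets. These are the highest-degree objects, and they are where I expect the claimed closure, linear in the $\cF$'s with $C$-coefficients, to be least transparent. The first approach I would try is again Jacobi, writing $\{\cF_{12,23},\cF_{23,34}\}=\{\{C_{12},C_{23}\},\cF_{23,34}\}$ and expanding it as $\{C_{12},\{C_{23},\cF_{23,34}\}\}-\{C_{23},\{C_{12},\cF_{23,34}\}\}$. Both inner brackets are already known to be quadratic in the $C$'s, and applying the derivations $\{C_{12},\cdot\}$ and $\{C_{23},\cdot\}$ to them produces only $\cF$'s times $C$'s. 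This reduces every $\{\cF,\cF\}$ relation to the seed--$\cF$ relations plus bookkeeping, and shows that the $z$-independence of all coefficients is inherited from the seed--$\cF$ layer. Only the seed--$\cF$ cross relations then require the direct localized computation.
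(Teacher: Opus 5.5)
Your structural layer is correct, and it is more economical than the paper, whose proof simply evaluates every bracket directly in the twelve-generator algebra. The left/right embeddings and the coarse-graining morphisms $\mathrm{id}\otimes\mathrm{id}\otimes\Delta_z$ and $\Delta_z\otimes\mathrm{id}\otimes\mathrm{id}$ reproduce the eight ``diagonal'' seed--$\cF$ relations exactly. Under the second morphism $\cF_{12,23}\mapsto\{C_{123},C_{34}\}=-\cF_{34,123}$, so watch the sign. The four vanishing seed--$\cF$ brackets follow from Jacobi as you show, and Jacobi correctly pairs the cross relations. There is one slip in the centrality argument: $\{C_{1234},\Delta_z^{(m)}(J_i)\otimes\mathbf 1\}\neq0$ for $m<4$. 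What you want is that $C_{1234}$ is a function of $\Delta_z^{(m)}(J)\otimes\mathbf 1$ and $\mathbf 1\otimes\Delta_z^{(4-m)}(J)$, both of which commute with $C^{(m)}$; use $\mathrm{id}\otimes\Delta_z\otimes\mathrm{id}$ in the same way for $C_{23}$. Two steps, however, do not go through as stated.

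First, the relation for $\{C_{123},C_{234}\}$. At finite $z$ there is no decomposition of $C_{123}$ into two-site Casimirs, and Jacobi identities involving central elements are vacuous. Using $\{C_{123},C_{23}\}=\{C_{234},C_{23}\}=\{C_{12},C_{34}\}=0$, the displayed relation is equivalent to
\[
\{C_{123}-C_{12}-C_{23},\;C_{234}-C_{23}-C_{34}\}=0 .
\]
For $z\to0$ this is $\{C_{13},C_{24}\}=0$ by disjoint support. At finite $z$ the exponential factors prevent the cancellations, and both combinations depend on the generators of sites $2$ and $3$. So this is a genuinely four-site identity and must be added to your list of direct evaluations.

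Second, and more importantly, the Jacobi reduction of the $\{\cF,\cF\}$ brackets proves quadratic closure but does not deliver the displayed right-hand sides. For instance, expanding $\cF_{12,23}=\{C_{12},C_{23}\}$ and inserting the seed--$\cF$ relations gives
\begin{align*}
\{\cF_{12,23},\cF_{12,234}\}={}&8\cF_{12,234}(C_{12}+C_{23}-C_{123}-C_2)+8\cF_{12,23}(C_{1234}-C_{12}-C_{234}+C_2)\\
&-8\cF_{23,34}(C_{12}+C_1-C_2),
\end{align*}
whereas the proposition asserts $8\cF_{34,123}(C_{12}-C_1-C_2)$. The two agree only because of the linear relation
\begin{align*}
&\cF_{12,234}(C_{12}+C_{23}-C_{123}-C_2)+\cF_{12,23}(C_{1234}-C_{12}-C_{234}+C_2)\\
&\quad-\cF_{23,34}(C_{12}+C_1-C_2)-\cF_{34,123}(C_{12}-C_1-C_2)=0 .
\end{align*}
In the undeformed limit the four $\cF$'s become, up to a common factor, the triple products $[123]$, $[234]$, $[123]+[124]$, $-[134]-[234]$ of the site vectors. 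The relation above then becomes the Cramer identity $(J_1\cdot J_1)[234]-(J_1\cdot J_2)[134]+(J_1\cdot J_3)[124]-(J_1\cdot J_4)[123]=0$, which expresses that four vectors in three dimensions are dependent.

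Such syzygies, linear in the $\cF$'s with coefficients linear in the $C$'s, are not bookkeeping consequences of the seed--$\cF$ relations. They make the quadratic form of each $\{\cF,\cF\}$ non-unique, so the representative you land on will generally differ from the stated one. To prove the proposition as displayed, you must either establish these deformed syzygies separately (e.g.\ in your localized Laurent ring) or evaluate the $\{\cF,\cF\}$ brackets directly, as the paper does.
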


\begin{proof}
    The result follows by direct evaluation of the Poisson brackets using the coproduct structure \cref{nonstandcoprod} and the twelve-generator Poisson bracket \cref{z12dimn4a}. Every bracket among the displayed generators reduces to a polynomial of degree at most two in the generators of $\cS_4$. The one-site Casimirs and $C_{1234}$ are central by construction. The Jacobi identity gives useful consistency checks, including
    $\{C_{12},\cF_{23,34}\}=-\{C_{34},\cF_{12,23}\}$ and
    $\{C_{123},\cF_{23,34}\}=-\{C_{23},\cF_{34,123}\}$.

\end{proof}

A second structural feature is visible directly in the complete four-site table: despite the exponential dependence of the partial Casimirs on $z$, none of the closure coefficients contains $z$ explicitly.  Thus the deformation-parameter independence observed at three sites persists at rank two.  This statement concerns the polynomial relations among the chosen partial Casimirs and bracket generators; it does not mean that the underlying Poisson coalgebra or its coproduct is undeformed.

Because these relations are identities in the abstract Poisson coalgebra, they remain valid under any Poisson realization of the same non-standard $\mathfrak{sl}_z(2)$ algebra.  Section~6 gives one explicit realization.

\section{Hamiltonians, superintegrability and symplectic realizations}
The preceding sections determine the symmetry algebra before any realization is chosen.  We now explain in more detail how this abstract result enters the construction of Hamiltonian systems.  Two related viewpoints are useful: one can build the Hamiltonian from central Casimir data, or directly as a function of the total coproduct generators.  The second viewpoint is the one most commonly used in explicit coalgebra models, while the first makes the central structure of the symmetry algebra transparent.

For $N=3$, both in the undeformed and deformed problems, the one-site Casimirs together with the total Casimir form the Abelian subalgebra
\begin{equation}
 \mathcal A_3=\langle C_1,C_2,C_3,C_{123}\rangle,
\end{equation}
whereas for $N=4$ one has
\begin{equation}
 \mathcal A_4=\langle C_1,C_2,C_3,C_4,C_{1234}\rangle.
\end{equation}
The remaining left and right partial Casimirs are nontrivial integrals for Hamiltonians constructed from this central data.

\begin{proposition}
Let
\begin{equation}
 \mathcal A_N=\langle C_1,\ldots,C_N,C_{1\cdots N}\rangle.
\end{equation}
Every element of $\mathcal A_N$ Poisson-commutes with the left and right partial Casimirs generated by the coalgebra. Consequently, a Hamiltonian of the form
\begin{equation}
 H=g(\mathcal A_N)
\end{equation}
admits these partial Casimirs as integrals of motion.
\end{proposition}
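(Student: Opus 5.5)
Since every element of $\mathcal A_N$ is a function of the generators $C_1,\ldots,C_N,C_{1\cdots N}$, and the Poisson bracket is a biderivation, it suffices to show that each generator Poisson-commutes with every left partial Casimir $C^{(m)}$ and every right partial Casimir $C_{(m)}$, $m=2,\ldots,N$. The statement for $H=g(\mathcal A_N)$ then follows from the chain rule, $\{g(\mathcal A_N),X\}=\sum_a \partial_a g\,\{A_a,X\}$, with $A_a$ running over the generators. The argument uses only the coalgebra structure of Section~2, so it applies uniformly to the undeformed coproduct and to the non-standard coproduct \cref{nonstandcoprod}.

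For the one-site Casimirs $C_i$ I would argue from the tensor structure. By \cref{eq:fourtensorcom}, the bracket on $A^{\otimes N}$ acts factorwise. The Casimir $C_i$ lives in the $i$-th factor and commutes there with every one-site generator. Any partial Casimir is a function of products of elements from the separate factors. Hence every term of $\{C_i,C^{(m)}\}$ or $\{C_i,C_{(m)}\}$ contains a factor of the form $\{C_i,y^{(i)}\}=0$, and the bracket vanishes. In particular this covers sites outside the support of the partial Casimir, where the bracket is trivially zero.

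For the total Casimir $C_{1\cdots N}=\Delta^{(N)}(C)$ I would apply \cref{poissoncommutegens} with the roles reversed. Coassociativity gives $\Delta^{(N)}=(\Delta^{(m)}\otimes\Delta^{(N-m)})\circ\Delta$, up to the appropriate bracketing. The Poisson-morphism property \cref{eq:poissonmorphism} applied to $\Delta$ then gives $\{C^{(N)},C^{(m)}\otimes 1\}$ as the image under $\Delta^{(m)}\otimes\Delta^{(N-m)}$ of $\{\Delta(C),\,C\otimes 1\}$. It therefore suffices to check the two-site identity $\{\Delta(C),C\otimes1\}=0$. Write $\Delta(C)$ as a function of the two-site objects, which are functions of $y^{(1)}$ and $y^{(2)}$. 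Because $C\otimes 1$ is central in the first factor and trivial in the second, the factorwise bracket vanishes term by term. The right chain is handled symmetrically, using $\{\Delta(C),1\otimes C\}=0$ and the embedding $\mathbf 1^{\otimes(N-m)}\otimes\Delta^{(m)}$.

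The main obstacle is the deformed case, where the coproduct contains the exponentials $e^{\pm zJ_-}$. These are non-polynomial, so one must check that the factorwise argument still applies. It does: the bracket \cref{eq:fourtensorcom} extends to smooth functions of the generators through the Leibniz and chain rules. Moreover $C\otimes1$ commutes with every function of the first-factor generators, including $e^{zJ_-^{(1)}}$. So the argument above goes through without any explicit computation, and it is consistent with the centrality already observed in Proposition~\ref{resultsprop}.
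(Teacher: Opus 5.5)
Your proof is correct and follows essentially the paper's route. The paper invokes \eqref{poissoncommutegens} (partial Casimirs commute with the total coproduct generators, hence with $C_{1\cdots N}$) together with the derivation property. Your reduction of $\{C^{(N)},C^{(m)}\otimes\mathbf 1\}$ to $\{\Delta(C),C\otimes 1\}=0$ through the Poisson morphism $\Delta^{(m)}\otimes\Delta^{(N-m)}$ is exactly the mechanism behind \eqref{poissoncommutegens}, applied to $\Delta(C)$ rather than to $\Delta(y^\alpha)$; your separate factorwise argument for the one-site Casimirs is more explicit than the paper's appeal to \eqref{poissoncommutegens} for them.
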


\begin{proof}
The one-site Casimirs and the total Casimir mutually Poisson commute.  Equation~\eqref{poissoncommutegens} implies that each coalgebra-generated left or right partial Casimir Poisson-commutes with the total Casimir and with the one-site Casimirs.  The result then follows from the derivation property of the Poisson bracket.
\end{proof}

An equivalent and often more useful construction starts from the total coproduct generators.
\begin{proposition}\label{hamprop2}
Any Hamiltonian of the form
\begin{equation}
H=g\!\left(\Delta^{(N)}(J_-),\Delta^{(N)}(J_+),\Delta^{(N)}(J_0)\right)
\end{equation}
or
\begin{equation}
H_z=g_z\!\left(\Delta_z^{(N)}(J_-),\Delta_z^{(N)}(J_+),\Delta_z^{(N)}(J_0)\right)
\end{equation}
admits the corresponding left and right partial Casimirs as integrals of motion.
\end{proposition}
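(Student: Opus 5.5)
The plan is to reduce everything to two facts already available in Section~2: the Poisson-morphism property of the (iterated) coproduct \eqref{eq:poissonmorphism}, and the fact that $\cC$ (resp.\ $\cC_z$) is a Casimir of the one-site algebra. Throughout we use the left embedding \eqref{eq:leftcoprod} and the right embedding \eqref{eq:rightcoprod} into $A^{\otimes N}$ (or $A_z^{\otimes N}$), and the derivation (Leibniz and chain-rule) property of the bracket. Since $g$ and $g_z$ are arbitrary smooth functions, it is enough to show that each partial Casimir Poisson-commutes with the three total generators $\Delta^{(N)}(J_\pm)$ and $\Delta^{(N)}(J_0)$. The chain rule then gives $\{C^{(m)},H\}=\sum_a \partial_a g\,\{C^{(m)},\Delta^{(N)}(J_a)\}=0$.

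\emph{Left chain.} By coassociativity, $\Delta^{(N)}=(\Delta^{(m)}\otimes\Delta^{(N-m)})\circ\Delta$, with the convention $\Delta^{(1)}=\mathrm{id}$. Write $\Delta(J_a)=\sum_r X_r\otimes Y_r$; in the deformed case this is \eqref{nonstandcoprod}, with $X_r,Y_r$ functions of the one-site generators. Then
\[
\Delta^{(N)}(J_a)=\sum_r \Delta^{(m)}(X_r)\otimes\Delta^{(N-m)}(Y_r).
\]
Apply the product bracket \eqref{eq:fourtensorcom} to $C^{(m)}=\Delta^{(m)}(\cC)\otimes\mathbf 1$. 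Its second tensor factor is $\mathbf 1$, which is central, so only the first factor contributes:
\[
\{C^{(m)},\Delta^{(N)}(J_a)\}=\sum_r \{\Delta^{(m)}(\cC),\Delta^{(m)}(X_r)\}\otimes\Delta^{(N-m)}(Y_r).
\]
By the Poisson-morphism property, $\{\Delta^{(m)}(\cC),\Delta^{(m)}(X_r)\}=\Delta^{(m)}(\{\cC,X_r\})$. This vanishes because $X_r$ is a function of the one-site generators and $\cC$ is a Casimir; this is the content of \eqref{poissoncommutegens}.

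\emph{Right chain.} The same argument applies with $\Delta^{(N)}=(\Delta^{(N-m)}\otimes\Delta^{(m)})\circ\Delta$, again by coassociativity. Now $\mathbf 1$ sits in the first factor, and the vanishing comes from $\Delta^{(m)}(\{\cC,Y_r\})=0$.

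The step I expect to need the most care is the deformed coproduct, where the $X_r,Y_r$ contain $e^{\pm zJ_-}$ and are not generators. Here I would stress that $\cC_z$ is a Casimir of the full one-site algebra of smooth functions, not merely of the three generators. Its brackets with $J_\pm,J_0$ vanish, and therefore so do its brackets with any smooth function of them, by the chain rule. One should also check that the iterated $\Delta_z^{(m)}$ is still a Poisson morphism extended to smooth functions. This holds because $\Delta_z$ is an algebra morphism compatible with exponentials, since $\Delta_z(J_-)$ is primitive, and $\Delta_z^{(m)}$ is a composition of Poisson morphisms, as stated after \eqref{eq:cas-nm}. The undeformed case is the special case $z=0$, with primitive coproduct, $X_r,Y_r\in\{J_a,1\}$.
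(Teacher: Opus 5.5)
Your proposal is correct and follows the same route as the paper: you reduce, via the derivation property of the bracket, to commutation of each partial Casimir with the three total generators, which the paper asserts directly from \eqref{poissoncommutegens}, and you additionally justify that step (for both chains and for $\Delta_z$) through the coassociative splitting $\Delta^{(N)}=(\Delta^{(m)}\otimes\Delta^{(N-m)})\circ\Delta$ and its mirror image. The one case your splitting does not literally cover is $m=N$; it follows at once from $\{\Delta^{(N)}(\cC),\Delta^{(N)}(J_a)\}=\Delta^{(N)}(\{\cC,J_a\})=0$, and likewise for $\Delta_z$.
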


\begin{proof}
The coproduct is a Poisson-algebra homomorphism, so the total coproduct generators satisfy the same Poisson relations as the one-site generators.  Each coalgebra-generated partial Casimir Poisson-commutes with these total generators and hence with any smooth function of them.
\end{proof}

These propositions establish the existence of a large family of conserved quantities but do not determine their non-Abelian closure.  The latter is a separate problem.  Even for polynomial integrals, naive degree counting does not in general predict whether the brackets close at the expected degree, and higher-order superintegrable systems can require additional generators and nested relations \cite{ian2010,parr2025}.  The calculations in Sections~3--5 provide information that is not contained in the standard coalgebra superintegrability theorem: they identify a finite set of generators and give all of their Poisson brackets explicitly.

This distinction is particularly useful for the present deformation.  The elementary expressions contain exponentials and hyperbolic functions, yet the algebra generated by the partial Casimirs closes quadratically.  Since this was established before choosing a realization, the same quadratic relations hold for every Poisson realization of $\mathfrak{sl}_z(2)$.  A realization may introduce additional algebraic or functional relations among the generators, but it cannot invalidate the universal identities obtained at the Poisson-coalgebra level.

The coalgebra-generated quantities should not be assumed to exhaust the symmetry of every particular Hamiltonian.  Special choices of the function $g$ or $g_z$ may admit additional integrals and become maximally superintegrable.  The Smorodinsky--Winternitz and generalized Kepler--Coulomb families provide familiar examples in which the full symmetry algebra contains generators beyond the universal coalgebra sector \cite{ques11,bh09kc,lat21}.  The algebra constructed here is best viewed as a symmetry subalgebra, independent of the chosen realization, that is common to the full class of Hamiltonians generated from the same coalgebra.

\subsection{Symplectic realization and explicit Hamiltonians}
Symplectic realizations of the non-standard deformed $\mathfrak{sl}(2)$ Poisson algebra have been used extensively in the coalgebra approach to integrable systems on curved spaces \cite{bal05,bal09b,bal07,bal08}.  A realization $D_z$ must preserve the Poisson brackets; for physical applications it is also useful that the undeformed limit be regular and that the realized generators have a standard dependence on the momenta.

We use the one-particle realization
\begin{align}
    D_z(J_-)&=q_1^2,\qquad
    D_z(J_+)=\frac{\sinh(zq_1^2)}{zq_1^2}p_1^2+\frac{zb_1}{\sinh(zq_1^2)},\qquad
    D_z(J_0)=\frac{\sinh(zq_1^2)}{zq_1}p_1,
    \label{oneparticlerealization}
\end{align}
for which the one-site Casimir at site $i$ takes the constant value $D_z(C_i)=b_i$.  Applying $D_z$ tensorwise to the iterated coproduct gives
\begin{equation}
    J_i^{(N)}=D_z^{\otimes N}\!\left(\Delta_z^{(N)}(J_i)\right),\qquad i=+,-,0.
\end{equation}
The Poisson-homomorphism property ensures that every realized partial Casimir satisfies exactly the same abstract relations derived in Sections~4 and 5.

In canonical variables $(q_i,p_i)$ the $N$-particle generators are
\begin{align}
    J_-^{(N)}&=\sum^N_{i=1}q_i^2,\\
    J_+^{(N)}&=\sum_{i=1}^N\left(\frac{\sinh(zq_i^2)}{zq_i^2}p_i^2+\frac{zb_i}{\sinh(zq_i^2)}\right)
    \exp\!\left(-z\sum_{k=1}^{i-1}q_k^2+z\sum_{l=i+1}^{N}q_l^2\right),\\
    J_0^{(N)}&=\sum_{i=1}^N\left(\frac{\sinh(zq_i^2)}{zq_i}p_i\right)
    \exp\!\left(-z\sum_{k=1}^{i-1}q_k^2+z\sum_{l=i+1}^{N}q_l^2\right).
\end{align}
Applying the realization to the left and right partial Casimirs yields the standard $(2N-3)$ coalgebra family
\begin{align}
    C^{(m)}&=\sum_{1\leq i<j\leq m} I^z_{ij}
    \exp\!\left(-2z\sum_{k=1}^{i-1}q_k^2-zq_i^2+zq_j^2+2z\sum_{l=j+1}^{m}q_l^2\right)\non\\
    &\quad+\sum_{i=1}^{m}b_i
    \exp\!\left(-2z\sum_{k=1}^{i-1}q_k^2+2z\sum_{l=i+1}^{m}q_l^2\right),\label{eq:leftrealizedgeneral}\\
    C_{(m)}&=\sum_{N-m+1\leq i<j\leq N} I^z_{ij}
    \exp\!\left(-2z\sum_{k=N-m+1}^{i-1}q_k^2-zq_i^2+zq_j^2+2z\sum_{l=j+1}^{N}q_l^2\right)\non\\
    &\quad+\sum_{i=N-m+1}^{N}b_i
    \exp\!\left(-2z\sum_{k=N-m+1}^{i-1}q_k^2+2z\sum_{l=i+1}^{N}q_l^2\right),\label{eq:rightrealizedgeneral}
\end{align}
where $m=2,\ldots,N$.  In the second line of each expression the upper-tail sum starts at $l=i+1$, so every index remains within the stated range.  The pair kernel is
\begin{equation}
\begin{aligned}
 I^z_{ij}={}&
 \frac{\sinh(zq_i^2)\sinh(zq_j^2)}{z^2q_i^2q_j^2}
 (q_ip_j-q_jp_i)^2
 +b_i\frac{\sinh(zq_j^2)}{\sinh(zq_i^2)}
 +b_j\frac{\sinh(zq_i^2)}{\sinh(zq_j^2)}.
\end{aligned}
\label{eq:pairkernel}
\end{equation}

For the two-site partial Casimirs this gives the compact exact expressions
\begin{align}
 L_{12}&:=D_z(C_{12})
 =b_1e^{2zq_2^2}+b_2e^{-2zq_1^2}
 +I^z_{12}e^{-z(q_1^2-q_2^2)},\label{eq:L12compact}\\
 L_{23}&:=D_z(C_{23})
 =b_2e^{2zq_3^2}+b_3e^{-2zq_2^2}
 +I^z_{23}e^{-z(q_2^2-q_3^2)},\label{eq:L23compact}\\
 L_{34}&:=D_z(C_{34})
 =b_3e^{2zq_4^2}+b_4e^{-2zq_3^2}
 +I^z_{34}e^{-z(q_3^2-q_4^2)}.\label{eq:L34compact}
\end{align}
The kernel form is preferable to repeatedly writing the same three-line phase-space expression.  The expression is completely explicit, and Appendix~B gives the corresponding factorized formulae for the four realized bracket generators without leaving any Poisson derivatives unevaluated.

The abstract construction produces families of integrable Hamiltonians.  Three standard examples are:
\begin{itemize}
    \item the geodesic Hamiltonian on a space of non-constant curvature,
    \begin{equation}
        H_z=\frac{1}{2}J_+^{(N)};
    \end{equation}
    \item the deformed oscillator,
    \begin{equation}
        H_z=\frac{1}{2}J_+^{(N)}+\omega^2J_-^{(N)};
    \end{equation}
    \item the deformed Kepler--Coulomb Hamiltonian,
    \begin{equation}
        H_z=\frac{1}{2}J_+^{(N)}-\frac{k}{\sqrt{J_-^{(N)}}}.
    \end{equation}
\end{itemize}
In each case the same left and right partial Casimirs are conserved because the Hamiltonian is a function of total coproduct generators.  The deformation parameter controls the departure from the undeformed geometry, while the algebra derived in Section~5 controls the universal four-site coalgebra sector of the symmetry.  Further Hamiltonians can be obtained by replacing the displayed functions by more general $g_z(J_-^{(N)},J_+^{(N)},J_0^{(N)})$ without altering this universal set of integrals.

\subsection[A maximally superintegrable q-deformed Smorodinsky--Winternitz model]{A maximally superintegrable $q$-deformed Smorodinsky--Winternitz model}
\label{sec:qswapplication}

The preceding examples illustrate the universal coalgebra sector, but they do not by themselves address the distinction between quasi-maximal and maximal superintegrability.  This distinction can be made explicit by specializing the Hamiltonian while retaining the same non-standard $\mathfrak{sl}_z(2)$ Poisson coalgebra.  A particularly useful example is the St\"ackel deformation of the Smorodinsky--Winternitz (SW) system constructed in \cite{bal04}.  In the normalization used in that reference the Hamiltonian function on the abstract Poisson coalgebra is
\begin{equation}
 \mathcal H^{\rm SSW}_z
 =
 J_+ e^{zJ_-}
 +\frac{\omega^2}{2z}\left(e^{2zJ_-}-1\right).
 \label{eq:qSWHabstract}
\end{equation}
The sign convention for the Casimir in \cite{bal04} is opposite to the convention adopted in the present paper; this does not affect the Hamiltonian or the St\"ackel integrals below.  Under the realization \eqref{oneparticlerealization}, the $N$-particle Hamiltonian is
\begin{align}
 H^{\rm SSW}_{z,N}
 ={}&
 \sum_{i=1}^N
 \frac{\sinh(zq_i^2)}{zq_i^2}
 e^{zq_i^2}
 \exp\!\left(2z\sum_{k=i+1}^Nq_k^2\right)
 \left[
 p_i^2+b_i\left(\frac{zq_i}{\sinh(zq_i^2)}\right)^2
 \right]
 +\frac{\omega^2}{2z}
 \left[
 \exp\!\left(2z\sum_{j=1}^Nq_j^2\right)-1
 \right].
 \label{eq:qSWHN}
\end{align}
This Hamiltonian is of St\"ackel type.  Besides the left and right coalgebra integrals, it admits the additional commuting family \cite{bal04}
\begin{align}
 Z_j={}&
 \sum_{i=1}^{j-1}
 \frac{\sinh(zq_i^2)}{2zq_i^2}
 e^{zq_i^2}
 \exp\!\left(2z\sum_{k=i+1}^{j-1}q_k^2\right)
 \left[
 p_i^2+b_i\left(\frac{zq_i}{\sinh(zq_i^2)}\right)^2
 \right]
 \notag\\
 &+
 \frac{\omega^2}{4z}
 \left[
 \exp\!\left(2z\sum_{k=1}^{j-1}q_k^2\right)-1
 \right],
 \qquad j=2,\ldots,N.
 \label{eq:qSWZj}
\end{align}
In particular, $Z_2$ can be used as the one additional integral that is not supplied by the coalgebra construction.  The St\"ackel system is maximally superintegrable: the $2N-3$ universal coalgebra integrals, together with the Hamiltonian, give $2N-2$ functionally independent constants, and $Z_2$ completes this family to $2N-1$ \cite{bal04}.

The algebra needed below is already fixed before choosing a symplectic realization.  We isolate the relevant two-block identity once and then use it repeatedly.

\begin{lemma}[Exact abstract two-block St\"ackel identity]\label{lem:qSWtwoblock}
Let $(J_-^{(1)},J_+^{(1)},J_0^{(1)})$ and $(J_-^{(2)},J_+^{(2)},J_0^{(2)})$ be two mutually Poisson-commuting copies of the non-standard algebra \eqref{nonstandrelations}, with one-block Casimirs
\begin{equation}
 a=\cC_z^{(1)},\qquad b=\cC_z^{(2)}.
\end{equation}
Let $J_\mu^{(12)}=\Delta_z(J_\mu)$, $\mu=-,+,0$, be their two-block coproduct generators and define
\begin{equation}
 h_z(J_-,J_+,J_0):=
 \frac12\left[J_+e^{zJ_-}+\frac{\omega^2}{2z}\left(e^{2zJ_-}-1\right)\right],
 \label{eq:qSWhzhalf}
\end{equation}
so that $h_z=\mathcal H_z^{\rm SSW}/2$.  Set
\begin{equation}
 A=\cC_z(J^{(12)}),\qquad
 X=h_z(J^{(1)}),\qquad
 T=h_z(J^{(12)}),\qquad
 G=\{A,X\}.
 \label{eq:qSWtwoblockdefs}
\end{equation}
Then $a,b,T$ are central in the Poisson algebra generated by $A,X,G$, and
\begin{align}
 \{A,G\}={}&-8zA^2-16AX+8AT+8z(a+b)A+8(a-b)T,
 \label{eq:qSWtwoblockAG}\\
 \{X,G\}={}&8X^2+16zAX-8XT+4\omega^2A
 -8z(a+b)X-8zaT
 +8z^2ab-4\omega^2(a+b).
 \label{eq:qSWtwoblockXG}
\end{align}
Moreover,
\begin{align}
 \cK_{2\rm b}={}&G^2+16zA^2X+16AX^2-16ATX
 -16z(a+b)AX-16(a-b)TX
 \notag\\
 &+4\omega^2A^2
 +\left[-16zaT+16z^2ab-8\omega^2(a+b)\right]A
 \label{eq:qSWtwoblockK}
\end{align}
has the central value
\begin{equation}
 \cK_{2\rm b}=-16aT^2+16za(b-a)T-4\omega^2(a-b)^2.
 \label{eq:qSWtwoblockKvalue}
\end{equation}
\end{lemma}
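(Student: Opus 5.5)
The plan is to prove the lemma directly in the six-generator Poisson algebra generated by the two commuting copies $(J_\mu^{(1)})$ and $(J_\mu^{(2)})$. The computation is organised so that most of the work reduces to one-copy identities and coproduct properties.

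\emph{Centrality.} First I will establish the central elements. The one-block Casimirs $a$ and $b$ Poisson-commute with every generator of their own block and trivially with the other block. Hence they are central in the whole localized algebra, in particular in the subalgebra generated by $A$, $X$, $G$. Since $\Delta_z$ is a Poisson morphism, $A=\Delta_z(\cC_z)$ commutes with every smooth function of the $J_\mu^{(12)}$, so $\{A,T\}=0$. Next I will check $\{X,T\}=0$. Writing $T=\tfrac12[J_+^{(12)}e^{zJ_-^{(12)}}+\frac{\omega^2}{2z}(e^{2zJ_-^{(12)}}-1)]$ and using $\Delta_z(J_+)=J_+^{(1)}e^{zJ_-^{(2)}}+e^{-zJ_-^{(1)}}J_+^{(2)}$, one finds
\[
T=e^{2zJ_-^{(2)}}X+\tfrac12 J_+^{(2)}e^{zJ_-^{(2)}}+\tfrac{\omega^2}{4z}(e^{2zJ_-^{(2)}}-1).
\]
This is the St\"ackel splitting: the second block enters only through $J_\pm^{(2)}$, which commute with $X$. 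Therefore $\{X,T\}=0$ reduces to $\{X,X\}=0$. The Jacobi identity then gives $\{T,G\}=\{T,\{A,X\}\}=0$, so $T$ is central as well.

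\emph{The two quadratic brackets.} I will next compute $G=\{A,X\}$ explicitly. Only the block-1 part of $A$ contributes, through the brackets of $J_\mu^{(1)}$ with $J_+^{(1)}e^{zJ_-^{(1)}}$ and $e^{2zJ_-^{(1)}}$. Using \eqref{nonstandrelations} gives a closed expression for $G$ that is linear in the "cross" quantities $J_0^{(1)}$, $J_+^{(2)}$, $J_0^{(2)}$ with exponential prefactors. The brackets $\{A,G\}$ and $\{X,G\}$ are then computed the same way. The resulting expressions are matched against the right-hand sides of \eqref{eq:qSWtwoblockAG} and \eqref{eq:qSWtwoblockXG}. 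For the matching I will eliminate $J_+^{(1)}$ and $J_+^{(2)}$ using $a$ and $b$ (via $J_+=z(\cC_z+J_0^2)/\sinh zJ_-$), and eliminate $J_+^{(2)}$ a second time using $T$. After this both sides are rational functions of $J_-^{(1)},J_-^{(2)},J_0^{(1)},J_0^{(2)}$ and the Casimirs, and they can be compared coefficient by coefficient in $J_0^{(1)}$ and $J_0^{(2)}$ after clearing the $\sinh$ denominators. I expect this to be the main obstacle. The hyperbolic factors must cancel exactly so that no explicit $J_-$ survives, and the $z$-dependent terms $-8zA^2$ and $16zAX$ show that the cancellation is more delicate than in \cref{resultsprop}. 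I would first check these identities in the realization \eqref{oneparticlerealization}, or at $z\to0$ against \cref{classracah1}, as a guide, and then carry out the abstract elimination. This elimination is legitimate in the localized algebra where $\sinh zJ_-^{(\alpha)}$ is inverted.

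\emph{The Casimir relation.} For $\cK_{2\rm b}$ the plan is to avoid a third brute-force computation.
\begin{itemize}
\item[(i)] Show that $\cK_{2\rm b}$ is central by computing $\{A,\cK_{2\rm b}\}$ and $\{X,\cK_{2\rm b}\}$ with the Leibniz rule, \eqref{eq:qSWtwoblockAG}, \eqref{eq:qSWtwoblockXG} and the centrality of $a,b,T$. Each bracket takes the form $G\cdot(\text{polynomial})$, and that polynomial cancels identically; this is the standard check that the displayed expression is the Casimir of the quadratic algebra.
\item[(ii)] Evaluate $\cK_{2\rm b}$ using the explicit form of $G^2$ and the same elimination as above, to obtain the value \eqref{eq:qSWtwoblockKvalue}. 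Since the expression is central, it suffices to evaluate it on a convenient slice, for instance $J_0^{(1)}=J_0^{(2)}=0$ with the $J_+$ fixed by $a$ and $b$, provided such points are generic. A slice evaluation alone only determines the value on that slice, so I would keep the full elimination as the definitive check.
\end{itemize}
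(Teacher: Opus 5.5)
Your proposal is correct. Its centrality part coincides with the paper's argument. Your splitting $T=e^{2zJ_-^{(2)}}X+h_z(J^{(2)})$ is the paper's $T=X/t^2+Y$ in the localized variables $s=e^{-zJ_-^{(1)}}$, $t=e^{-zJ_-^{(2)}}$, $u=J_0^{(1)}$, $v=J_0^{(2)}$. Both then use the coproduct argument for $\{A,T\}=0$ and Jacobi for $\{G,T\}=0$.

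For the quadratic relations, however, you run the logic in the opposite direction from the paper.

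\emph{The paper's route.} It performs a single brute-force verification: the rational identity $G^2=\Phi(A,X,T;a,b)$, i.e.\ the value of $\cK_{2\rm b}$, checked in the localized coordinates. Both brackets then follow by applying $\{A,\cdot\}$ and $\{X,\cdot\}$ to this identity, which gives $2G\{A,G\}=\Phi_XG$ and $2G\{X,G\}=-\Phi_AG$. The factor $G$ is cancelled because the localized algebra is an integral domain and $G\neq0$.

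\emph{Your route.} You compute $\{A,G\}$ and $\{X,G\}$ directly and derive the centrality of $\cK_{2\rm b}$ from them; your step (i) is exactly the converse of the paper's differentiation step. You then still need the full elimination in step (ii). As you correctly note, the slice evaluation is not conclusive by itself: commuting with $A,X,G$ does not make $\cK_{2\rm b}$ a function of $a,b,T$ without a separate functional-independence argument.

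\emph{Comparison.} Your route needs no division by $G$ and no check that $G\neq0$. It costs the two second-order bracket computations you identify as the main obstacle. Those computations, and step (i), become redundant once the elimination in (ii) is done; that is precisely the economy the paper exploits.

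Two small points of wording. After $J_+^{(1)},J_+^{(2)}$ have been eliminated through $a,b$, $T$ is just the explicit function $e^{2zJ_-^{(2)}}X+h_z(J^{(2)})$ of the remaining coordinates, so there is no second elimination of $J_+^{(2)}$. Also, $G$ is not linear in $J_0^{(1)},J_0^{(2)}$; for example, it contains the term $4z\,e^{zJ_-^{(2)}}(J_0^{(1)})^2J_0^{(2)}$. Nothing in your argument depends on that remark.
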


\begin{proof}
The statement is an identity of the abstract two-copy Poisson algebra.  Localize only with respect to the factors needed to solve the two Casimir equations and introduce
\begin{equation}
 s=e^{-zJ_-^{(1)}},\qquad t=e^{-zJ_-^{(2)}},\qquad
 u=J_0^{(1)},\qquad v=J_0^{(2)}.
 \label{eq:qSWstuv}
\end{equation}
The Casimir constraints give
\begin{equation}
 J_+^{(1)}=\frac{2zs(a+u^2)}{1-s^2},\qquad
 J_+^{(2)}=\frac{2zt(b+v^2)}{1-t^2},
 \label{eq:qSWJpluslocalized}
\end{equation}
and the only nonzero brackets among $s,t,u,v$ are
\begin{equation}
 \{s,u\}=s^2-1,\qquad \{t,v\}=t^2-1.
 \label{eq:qSWlocalizedbrackets}
\end{equation}
Hence
\begin{equation}
 \{f,g\}=(s^2-1)(f_sg_u-f_ug_s)
 +(t^2-1)(f_tg_v-f_vg_t).
 \label{eq:qSWlocalizedPB}
\end{equation}
The abstract quantities become the rational functions
\begin{align}
 X={}&\frac{z(a+u^2)}{1-s^2}
 +\frac{\omega^2(1-s^2)}{4zs^2},
 \label{eq:qSWlocalizedX}\\
 Y={}&\frac{z(b+v^2)}{1-t^2}
 +\frac{\omega^2(1-t^2)}{4zt^2},
 \qquad T=\frac{X}{t^2}+Y,
 \label{eq:qSWlocalizedT}\\
 A={}&\frac{(1-s^2t^2)(a+u^2)}{t^2(1-s^2)}
 +\frac{(1-s^2t^2)(b+v^2)}{1-t^2}
 -\left(\frac{u}{t}+sv\right)^2.
 \label{eq:qSWlocalizedA}
\end{align}
Equation~\eqref{eq:qSWlocalizedT} gives $\{X,T\}=0$ directly.  Since $A$ is the Casimir of the coproduct copy and $T$ is a function of the total coproduct generators, $\{A,T\}=0$; the Jacobi identity then gives $\{G,T\}=0$.

Because $X$ is independent of $t,v$, the definition of $G$ reduces to
\begin{equation}
 G=(s^2-1)(A_sX_u-A_uX_s).
 \label{eq:qSWlocalizedG}
\end{equation}
Substitution of \eqref{eq:qSWlocalizedX}--\eqref{eq:qSWlocalizedG} into \eqref{eq:qSWtwoblockK}, followed by reduction to the common denominator
$s^4t^4z^2(1-s^2)^2(1-t^2)^2$, gives identically the value \eqref{eq:qSWtwoblockKvalue}.  Equivalently,
\begin{equation}
 G^2=\Phi(A,X,T;a,b),
 \label{eq:qSWPhiidentity}
\end{equation}
where $\Phi$ is obtained by moving the non-$G^2$ terms of \eqref{eq:qSWtwoblockK} to the right and inserting \eqref{eq:qSWtwoblockKvalue}.  This is an exact rational identity in the localized abstract Poisson algebra.  More explicitly, the polynomial coordinate ring in $s,t,u,v,a,b,z,\omega$ is an integral domain, so localization at the multiplicative set generated by $z,s,t,1-s^2,1-t^2$ is injective.  Equivalently, multiplying by the displayed common denominator gives a numerator identity in the unlocalized coordinate ring.  Hence \eqref{eq:qSWPhiidentity} is not merely a generic-point statement and extends back after denominators are cleared.

The two bracket relations follow from this single identity.  Since $a,b,T$ are central and $\{A,X\}=G$,
\begin{equation}
 2G\{A,G\}=\Phi_XG,
 \qquad
 2G\{X,G\}=-\Phi_A G.
 \label{eq:qSWdifferentiatePhi}
\end{equation}
The localized rational-function algebra is an integral domain.  Moreover, $G$ is not the zero element: after writing \eqref{eq:qSWlocalizedG} over a common denominator, the coefficient of $b$ in its numerator is $4s^3t^4uz^2(1-s^2)$, which is not the zero polynomial.  Therefore the common factor $G$ in \eqref{eq:qSWdifferentiatePhi} can be cancelled as an identity.  Differentiating the explicit polynomial $\Phi$ gives exactly \eqref{eq:qSWtwoblockAG} and \eqref{eq:qSWtwoblockXG}.  Thus the closure and Casimir relation follow solely from the defining Poisson brackets, coproduct and Casimir, with no appeal to a canonical realization.
\end{proof}

For the three-dimensional case that is directly related to the rank-one algebra of Section~4, the first two St\"ackel integrals read
\begin{align}
 Z_2={}&
 \frac{\sinh(zq_1^2)}{2zq_1^2}e^{zq_1^2}
 \left[
 p_1^2+b_1\left(\frac{zq_1}{\sinh(zq_1^2)}\right)^2
 \right]
 +\frac{\omega^2}{4z}\left(e^{2zq_1^2}-1\right),
 \label{eq:qSWZ2}\\
 Z_3={}&
 \frac{\sinh(zq_1^2)}{2zq_1^2}e^{zq_1^2+2zq_2^2}
 \left[
 p_1^2+b_1\left(\frac{zq_1}{\sinh(zq_1^2)}\right)^2
 \right]
 \notag\\
 &+
 \frac{\sinh(zq_2^2)}{2zq_2^2}e^{zq_2^2}
 \left[
 p_2^2+b_2\left(\frac{zq_2}{\sinh(zq_2^2)}\right)^2
 \right]
 +\frac{\omega^2}{4z}
 \left(e^{2z(q_1^2+q_2^2)}-1\right).
 \label{eq:qSWZ3}
\end{align}
The undeformed limits are
\begin{align}
 Z_2&\longrightarrow
 \frac12\left(p_1^2+\frac{b_1}{q_1^2}+\omega^2q_1^2\right),\\
 Z_3&\longrightarrow
 \frac12\left(
 p_1^2+\frac{b_1}{q_1^2}
 +p_2^2+\frac{b_2}{q_2^2}
 +\omega^2(q_1^2+q_2^2)
 \right),
 \qquad z\to0,
 \label{eq:qSWundeformedZ}
\end{align}
so that the additional integral continuously deforms the familiar Cartesian SW separation constant.

The Hamiltonian \eqref{eq:qSWHabstract} and the St\"ackel family $Z_j$ are recalled from Ref.~\cite{bal04}; our calculation concerns their Poisson algebraic coupling to the partial Casimir $L_{12}$.  The extra integral does more than establish maximal superintegrability: it produces a quadratic algebra specific to the St\"ackel model that can be compared directly with the universal deformed Racah algebra.  Let
\begin{equation}
 A=L_{12},\qquad X=Z_2,\qquad T=Z_3,\qquad
 G=\{A,X\}.
 \label{eq:qSWAXTG}
\end{equation}
Here $A$ is one of the universal partial Casimirs.  Applying Lemma~\ref{lem:qSWtwoblock} to the elementary blocks $1$ and $2$, with $a=b_1$ and $b=b_2$, gives
\begin{equation}
 \{A,T\}=0,\qquad
 \{X,T\}=0,\qquad
 \{G,T\}=0.
 \label{eq:qSWTcentral}
\end{equation}
Accordingly, $T$ is central in the subalgebra generated by $A,X,G$.  At $N=3$ it should not be counted as a sixth functionally independent integral: the maximally superintegrable family already has the maximal rank $2N-1=5$.  Its role in the quadratic block is therefore that of a conserved central parameter.  The remaining nontrivial brackets close quadratically:
\begin{align}
 \{A,G\}={}&
 -8zA^2-16AX+8AT
 +8z(b_1+b_2)A
 +8(b_1-b_2)T,
 \label{eq:qSWclosure1}\\
 \{X,G\}={}&
 8X^2+16zAX-8XT
 +4\omega^2A
 -8z(b_1+b_2)X
 -8zb_1T+8z^2b_1b_2
 -4\omega^2(b_1+b_2).
 \label{eq:qSWclosure2}
\end{align}
Consequently, the missing St\"ackel integral generates a finite quadratic Poisson algebra rather than an open sequence of new functions.

\begin{proposition}
For the three-dimensional St\"ackel q-SW system, the functions $A,X,G$ defined by \eqref{eq:qSWAXTG} generate a quadratic Poisson algebra over the central parameters $T,b_1,b_2$.  A Casimir relation for this algebra is
\begin{align}
 \mathcal K_{\rm SSW}={}&
 G^2
 +16zA^2X
 +16AX^2
 -16ATX
 -16z(b_1+b_2)AX
 -16(b_1-b_2)TX
 \notag\\
 &+4\omega^2A^2
 +\left[
 -16zb_1T+16z^2b_1b_2
 -8\omega^2(b_1+b_2)
 \right]A
 \label{eq:qSWCasimirLeft}
\end{align}
with the central value
\begin{equation}
 \mathcal K_{\rm SSW}
 =
 -16b_1T^2
 +16zb_1(b_2-b_1)T
 -4\omega^2(b_1-b_2)^2.
 \label{eq:qSWCasimirValue}
\end{equation}
\end{proposition}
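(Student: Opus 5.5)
The plan is to obtain the proposition as a specialization of Lemma~\ref{lem:qSWtwoblock}, pushed through the canonical realization \eqref{oneparticlerealization}. No new Poisson computation in phase space is needed.

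The first step is to check that the realized objects are images of the abstract ones appearing in the lemma. Take the two blocks to be the elementary sites $1$ and $2$, realized by $D_z\otimes D_z$. By \eqref{oneparticlerealization} the one-block Casimirs take the values $a\mapsto b_1$ and $b\mapsto b_2$. The image of $\cC_z(J^{(12)})$ is $L_{12}$, so the abstract $A$ maps to the realized $A$.

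Next, for $X$: evaluating $h_z(J^{(1)})$ on the one-particle realization gives $\tfrac12[J_+e^{zJ_-}+\tfrac{\omega^2}{2z}(e^{2zJ_-}-1)]$ with $J_-=q_1^2$. Using
\begin{equation*}
 \frac{zb_1}{\sinh(zq_1^2)}=\frac{\sinh(zq_1^2)}{zq_1^2}\,b_1\left(\frac{zq_1}{\sinh(zq_1^2)}\right)^2,
\end{equation*}
this reproduces \eqref{eq:qSWZ2}, so $X\mapsto Z_2$.

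For $T$: evaluate $h_z$ on $\Delta_z(J_\mu)$. The coproduct \eqref{nonstandcoprod} gives $J_+^{(12)}=J_+^{(1)}e^{zq_2^2}+J_+^{(2)}e^{-zq_1^2}$ and $J_-^{(12)}=q_1^2+q_2^2$. Multiplying by $e^{z(q_1^2+q_2^2)}$ yields exactly the two kinetic terms of \eqref{eq:qSWZ3}, together with the $\omega^2$ term, so $T\mapsto Z_3$.

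With these identifications in place, the realization map $D_z\otimes D_z$ is a Poisson homomorphism and sends central elements to constants. It therefore carries $G=\{A,X\}$ to the realized $G$. It also carries the abstract identities \eqref{eq:qSWtwoblockAG}, \eqref{eq:qSWtwoblockXG}, \eqref{eq:qSWtwoblockK} and \eqref{eq:qSWtwoblockKvalue} to \eqref{eq:qSWclosure1}, \eqref{eq:qSWclosure2}, \eqref{eq:qSWCasimirLeft} and \eqref{eq:qSWCasimirValue} under $a\to b_1$, $b\to b_2$.

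The lemma also gives $\{A,T\}=\{X,T\}=\{G,T\}=0$, and $b_1,b_2$ are constants. Hence $A,X,G$ generate a Poisson algebra over the central parameters $T,b_1,b_2$. All brackets among the generators are quadratic polynomials in $A,X,G$, so the algebra is quadratic.

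It remains to see that $\mathcal K_{\rm SSW}$ is a Casimir of this algebra. It equals the central quantity \eqref{eq:qSWCasimirValue}, which is a function of $T,b_1,b_2$ only, so it Poisson-commutes with $A$, $X$ and $G$. One can cross-check this directly: $\{\mathcal K_{\rm SSW},A\}$ equals $2G\{G,A\}$ plus the $X$-derivative terms times $\{X,A\}=-G$, and this vanishes by the relation $2\{A,G\}=\Phi_X$.

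The main obstacle is the identification step, in particular matching the exponential factors in $T$ with $Z_3$. This is where the ordering $e^{zJ_-}\otimes$ versus $e^{-zJ_-}\otimes$ in the coproduct matters. A sign slip there would interchange the roles of $a$ and $b$ and alter the asymmetric terms $8(b_1-b_2)T$ and $-8zb_1T$. I would therefore verify the sign by checking the $z\to0$ limit \eqref{eq:qSWundeformedZ}, and check the $b_1$ coefficient of $Z_3$ against $\tfrac{zb_1}{\sinh(zq_1^2)}e^{zq_2^2}e^{z(q_1^2+q_2^2)}$.
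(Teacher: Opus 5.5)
Your proposal is correct and is essentially the paper's proof, which consists precisely of specializing Lemma~\ref{lem:qSWtwoblock} to the elementary blocks $1,2$ with $a=b_1$, $b=b_2$, $A=L_{12}$, $X=Z_2$, $T=Z_3$. Your explicit check that $D_z\otimes D_z$ carries $h_z(J^{(1)})$ and $h_z(J^{(12)})$ to \eqref{eq:qSWZ2} and \eqref{eq:qSWZ3} spells out an identification the paper leaves implicit; note that the $b_1$-coefficient in your final check should carry the factor $\tfrac12$ coming from $h_z$.
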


\begin{proof}
This is the specialization of Lemma~\ref{lem:qSWtwoblock} to the elementary blocks $1$ and $2$, with $a=b_1$, $b=b_2$, $A=L_{12}$, $X=Z_2$ and $T=Z_3$.  Hence both the quadratic brackets and the central value of $\mathcal K_{\rm SSW}$ follow at the abstract Poisson-coalgebra level; no phase-space realization is required for the proof.
\end{proof}

This provides a concrete interpretation of the algebraic results obtained earlier in the paper.  The universal rank-one deformed Racah sector generated by
\begin{equation}
 L_{12},\qquad L_{23},\qquad
 \mathcal F_{12,23}=\{L_{12},L_{23}\}
\end{equation}
is present for every Hamiltonian built from the same total coalgebra generators.  For the special St\"ackel q-SW Hamiltonian, the additional integral $Z_2$ enlarges this universal symmetry.  The two structures overlap through $L_{12}$, while
\begin{equation}
 \{Z_2,L_{23}\}=0
 \label{eq:qSWZ2L23}
\end{equation}
because the two functions have disjoint canonical support.  Hence the maximally superintegrable model contains, simultaneously, the universal deformed Racah algebra of Section~4 and the closed quadratic block \eqref{eq:qSWAXTG}--\eqref{eq:qSWclosure2} generated by the additional St\"ackel integral.  The complete closure of the union of these two overlapping subalgebras may require further cross-bracket generators, but maximal superintegrability and the existence of a finite closed algebra involving the missing integral are already explicit at this stage.

\subsubsection[Four-dimensional St\"ackel system]{The four-dimensional St\"ackel system and the complete coupling of $Z_2$ to the coalgebra family}
\label{sec:qsw4d}

For $N=4$ the St\"ackel family contains $Z_2,Z_3,Z_4$.  In addition to Eqs.~\eqref{eq:qSWZ2}--\eqref{eq:qSWZ3}, the third member is
\begin{align}
 Z_4={}&
 \frac{\sinh(zq_1^2)}{2zq_1^2}e^{zq_1^2+2z(q_2^2+q_3^2)}
 \left[p_1^2+b_1\left(\frac{zq_1}{\sinh(zq_1^2)}\right)^2\right]
 +\frac{\sinh(zq_2^2)}{2zq_2^2}e^{zq_2^2+2zq_3^2}
 \left[p_2^2+b_2\left(\frac{zq_2}{\sinh(zq_2^2)}\right)^2\right]
 \notag\\
 &+\frac{\sinh(zq_3^2)}{2zq_3^2}e^{zq_3^2}
 \left[p_3^2+b_3\left(\frac{zq_3}{\sinh(zq_3^2)}\right)^2\right]
 +\frac{\omega^2}{4z}\left(e^{2z(q_1^2+q_2^2+q_3^2)}-1\right).
 \label{eq:qSWZ4}
\end{align}
It is useful to extend the notation by
\begin{equation}
 Z_5:=\frac12 H^{\rm SSW}_{z,4}.
 \label{eq:qSWZ5}
\end{equation}
Thus $Z_{m+1}$ is one half of the St\"ackel Hamiltonian on the first $m$ sites, for $m=1,2,3,4$.  The quantities $Z_2,Z_3,Z_4,Z_5$ mutually Poisson commute.

Let
\begin{equation}
 X:=Z_2,
 \qquad
 A_2:=L_{12},\quad A_3:=L_{123},\quad A_4:=L_{1234},
 \label{eq:qSW4A}
\end{equation}
and introduce the complementary Casimirs
\begin{equation}
 B_2:=b_2,\qquad B_3:=L_{23},\qquad B_4:=L_{234},
 \label{eq:qSW4B}
\end{equation}
where $L_{123},L_{234},L_{1234}$ are the realized interval Casimirs listed in Appendix~A.  Finally set
\begin{equation}
 T_2:=Z_3,
 \qquad T_3:=Z_4,
 \qquad T_4:=Z_5,
 \qquad
 G_m:=\{A_m,X\},\quad m=2,3,4.
 \label{eq:qSW4GT}
\end{equation}
The five standard coalgebra integrals at $N=4$ may be taken as
\begin{equation}
 L_{12},\qquad L_{123},\qquad L_{1234},\qquad L_{34},\qquad L_{234}.
 \label{eq:qSW4coalgebrafamily}
\end{equation}
The direct coupling of the missing St\"ackel integral $X=Z_2$ to this complete family is
\begin{align}
 \{X,L_{12}\}&=-G_2,\qquad
 \{X,L_{123}\}=-G_3,\qquad
 \{X,L_{1234}\}=-G_4,
 \label{eq:qSW4directleft}\\
 \{X,L_{34}\}&=0,\qquad
 \{X,L_{234}\}=0. \label{eq:qSW4directright}
\end{align}
In addition, $\{X,L_{23}\}=0$.  The two zero brackets in \eqref{eq:qSW4directright} follow from disjoint canonical support.  Notice in particular that the total coalgebra Casimir $L_{1234}$, although central in the universal partial-Casimir algebra of Section~5, ceases to be central after the additional St\"ackel symmetry is adjoined.

The three nonzero brackets in \eqref{eq:qSW4directleft} do not generate an uncontrolled sequence.  For each $m=2,3,4$, the triple $A_m,X,G_m$ closes quadratically over the central elements $b_1,B_m,T_m$:
\begin{align}
 \{A_m,G_m\}={}&
 -8zA_m^2-16A_mX+8A_mT_m
 +8z(b_1+B_m)A_m
 +8(b_1-B_m)T_m,
 \label{eq:qSW4nested1}\\
 \{X,G_m\}={}&
 8X^2+16zA_mX-8XT_m+4\omega^2A_m
 -8z(b_1+B_m)X-8zb_1T_m
 \notag\\
 &+8z^2b_1B_m-4\omega^2(b_1+B_m).
 \label{eq:qSW4nested2}
\end{align}
Moreover,
\begin{equation}
 \{A_m,B_m\}=\{X,B_m\}=\{G_m,B_m\}=0,
 \qquad
 \{A_m,T_m\}=\{X,T_m\}=\{G_m,T_m\}=0.
 \label{eq:qSW4nestedcentral}
\end{equation}
For $m=2$, Eqs.~\eqref{eq:qSW4nested1}--\eqref{eq:qSW4nested2} are exactly the three-dimensional block \eqref{eq:qSWclosure1}--\eqref{eq:qSWclosure2}.  The genuinely new four-dimensional information is the pair of further blocks
\begin{equation}
 (A_3,B_3,T_3)=(L_{123},L_{23},Z_4),
 \qquad
 (A_4,B_4,T_4)=\left(L_{1234},L_{234},\frac12H^{\rm SSW}_{z,4}\right).
 \label{eq:qSW4newblocks}
\end{equation}
Thus the additional integral $Z_2$ couples nontrivially to every left coalgebra Casimir, while the complementary interval Casimir $B_m=C_{[2,m]}$ becomes a central parameter of the corresponding quadratic block.

Each block possesses a Casimir relation.  Define
\begin{align}
 \mathcal K_m={}&
 G_m^2+16zA_m^2X+16A_mX^2-16A_mT_mX
 -16z(b_1+B_m)A_mX-16(b_1-B_m)T_mX
 \notag\\
 &+4\omega^2A_m^2
 +\left[-16zb_1T_m+16z^2b_1B_m-8\omega^2(b_1+B_m)\right]A_m.
 \label{eq:qSW4nestedCasimir}
\end{align}
Then
\begin{equation}
 \mathcal K_m=
 -16b_1T_m^2+16zb_1(B_m-b_1)T_m
 -4\omega^2(b_1-B_m)^2,
 \qquad m=2,3,4.
 \label{eq:qSW4nestedCasimirValue}
\end{equation}
For $m=2$ this reduces to Eqs.~\eqref{eq:qSWCasimirLeft}--\eqref{eq:qSWCasimirValue}.  Equations~\eqref{eq:qSW4nested1}--\eqref{eq:qSW4nestedCasimirValue} therefore provide two additional nontrivial Casimir-bearing quadratic blocks at $N=4$.

\begin{proposition}\label{prop:qSW4nested}
For the four-dimensional St\"ackel q-SW system, the missing integral $Z_2$ has a closed quadratic coupling to every member of the standard left/right coalgebra family.  The only nonzero direct brackets with the five coalgebra integrals \eqref{eq:qSW4coalgebrafamily} are those in \eqref{eq:qSW4directleft}; each is governed by the quadratic relations \eqref{eq:qSW4nested1}--\eqref{eq:qSW4nested2} and the Casimir relation \eqref{eq:qSW4nestedCasimir}--\eqref{eq:qSW4nestedCasimirValue}.
\end{proposition}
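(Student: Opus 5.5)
The plan is to reduce every block to Lemma~\ref{lem:qSWtwoblock} by a suitable choice of two blocks, working at the abstract Poisson-coalgebra level and only then realizing. For $m=2,3,4$ I take the first block to be site $1$, with generators $J_\mu^{(1)}$ and Casimir $a=C_1$ (realized as $b_1$). The second block is the interval $[2,m]$, with generators the coproduct images $\Delta_z^{(m-1)}(J_\mu)$ supported on sites $2,\dots,m$ and Casimir $b=C_{[2,m]}$. For $m=2$ this block is just site $2$; for $m=3$ its Casimir is $C_{23}$, and for $m=4$ it is $C_{234}$. Since $\Delta_z^{(m-1)}$ is a Poisson morphism, these generators satisfy \eqref{nonstandrelations}, and they commute with site $1$ by disjoint support. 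So the hypotheses of the lemma hold, with $b=B_m$.

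Coassociativity, in the form $\Delta_z^{(m)}=(\mathrm{id}\otimes\Delta_z^{(m-1)})\circ\Delta_z$, identifies the two-block coproduct of these blocks with the $m$-site coproduct. Hence $A=\cC_z(J^{(12)})=C_{1\cdots m}=A_m$ and $T=h_z(\Delta_z^{(m)}(J))$. Also $X=h_z(J^{(1)})$, which is $Z_2$ after realization. Next I check that $h_z(\Delta_z^{(m)}(J))$ realizes to $Z_{m+1}=T_m$. This follows by comparing \eqref{eq:qSWHN} restricted to $m$ sites with \eqref{eq:qSWZj} and the definition \eqref{eq:qSWZ5}; both equal one half of the St\"ackel Hamiltonian on the first $m$ sites. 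The lemma then gives \eqref{eq:qSW4nested1}--\eqref{eq:qSW4nested2} and the Casimir relation \eqref{eq:qSW4nestedCasimir}--\eqref{eq:qSW4nestedCasimirValue} verbatim. The same substitution yields the centrality of $b_1$, $B_m$ and $T_m$ in \eqref{eq:qSW4nestedcentral}: $B_m$ is a Casimir of the second block, so it commutes with $J^{(1)}$ and with the coproduct generators, hence with $A_m$, $X$ and $G_m$ by Jacobi.

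For the direct brackets, \eqref{eq:qSW4directleft} is just the definition $G_m=\{A_m,X\}$ together with antisymmetry. The zeros \eqref{eq:qSW4directright}, and also $\{X,L_{23}\}=0$, hold because $X$ is supported on site $1$ while $L_{34}$, $L_{234}$ and $L_{23}$ are supported on sites $\geq2$. Listing the five integrals \eqref{eq:qSW4coalgebrafamily} then shows that the brackets in \eqref{eq:qSW4directleft} are the only nonzero ones.

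The main obstacle is the identification of $T$ with $Z_{m+1}$. This requires the realized $h_z(\Delta_z^{(m)}(J))$ to match \eqref{eq:qSWZj} exactly, including the exponential prefactors $e^{zq_i^2}\exp(2z\sum_{k>i}q_k^2)$ coming from $J_+^{(m)}e^{zJ_-^{(m)}}$. I would verify this using the explicit $N$-particle generators: the factor $e^{zJ_-^{(m)}}$ cancels the $\exp(-z\sum_{k<i}q_k^2)$ part and doubles the upper tail. A further subtlety is that the lemma's second block must be a genuine copy of the algebra, so that the localization argument applies. That argument uses only the bracket relations and the Casimir, so the lemma should apply to it unchanged.
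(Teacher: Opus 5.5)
Your proposal is correct and is essentially the paper's proof. You split off site $1$ and the composite interval $[2,m]$, and use coassociativity and the Poisson-morphism property of $\Delta_z^{(m-1)}$ so that Lemma~\ref{lem:qSWtwoblock} applies with $a\mapsto b_1$, $b\mapsto B_m$, $A\mapsto A_m$, $T\mapsto T_m$; the vanishing brackets with $L_{34}$, $L_{234}$ (and $L_{23}$) then come from disjoint support, exactly as in the paper. Your explicit check that $h_z(\Delta_z^{(m)}(J))$ realizes to $Z_{m+1}$, which the paper simply takes from the remark following \eqref{eq:qSWZ5}, is a useful addition rather than a different route.
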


\begin{proof}
Fix $m\in\{2,3,4\}$.  Regard site $1$ as the first Poisson-coalgebra block and the consecutive sites $2,\ldots,m$ as a second composite block.  By coassociativity and the Poisson-homomorphism property of the iterated coproduct, the total generators of the second block satisfy the same non-standard $\mathfrak{sl}_z(2)$ Poisson relations as one elementary copy, with Casimir $B_m=C_{[2,m]}$.  The Casimir of the combined two-block system is $A_m=C_{[1,m]}$.  Furthermore $X=Z_2=h_z(J^{(1)})$, while $T_m=Z_{m+1}=h_z(J^{([1,m])})$, with the convention \eqref{eq:qSWZ5} at $m=4$.  Lemma~\ref{lem:qSWtwoblock} therefore applies directly after
\begin{equation}
 A\mapsto A_m,\qquad a\mapsto b_1,\qquad b\mapsto B_m,\qquad T\mapsto T_m.
\end{equation}
This gives Eqs.~\eqref{eq:qSW4nested1}--\eqref{eq:qSW4nestedCasimirValue} without passing through a symplectic realization.  The right integrals $L_{34}$ and $L_{234}$ have support entirely on sites $2,3,4$ and hence Poisson-commute with $Z_2$, which depends only on site $1$.
\end{proof}

The same argument is not special to four sites.
\begin{corollary}[Arbitrary-$N$ nested coupling of the missing integral]\label{cor:qSWgeneralN}
For the $N$-site St\"ackel q-SW system, set
\begin{align}
 X&=Z_2, & A_m&=C_{[1,m]}, & B_m&=C_{[2,m]},\notag\\
 T_m&=Z_{m+1}, & G_m&=\{A_m,X\}, & m&=2,\ldots,N.
 \label{eq:qSWgeneralNdefs}
\end{align}
with $Z_{N+1}:=H^{\rm SSW}_{z,N}/2$.  Then for every $m$ the relations \eqref{eq:qSW4nested1}--\eqref{eq:qSW4nestedCasimirValue} hold with the same substitutions, and $X$ Poisson-commutes with every standard right coalgebra integral supported entirely on sites $2,\ldots,N$.  Hence these formulas give all direct brackets of the missing integral $Z_2$ with the standard $2N-3$ coalgebra family.
\end{corollary}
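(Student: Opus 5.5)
The plan is to repeat the proof of Proposition~\ref{prop:qSW4nested} verbatim for general $N$, reducing every nonzero bracket to Lemma~\ref{lem:qSWtwoblock}. Fix $m\in\{2,\ldots,N\}$ and split the sites $1,\ldots,m$ into the elementary block $\{1\}$ and the composite block $\{2,\ldots,m\}$. By coassociativity, $\Delta_z^{(m)}=(\mathrm{id}\otimes\Delta_z^{(m-1)})\circ\Delta_z$. By the Poisson-morphism property, the generators $J_\mu^{([2,m])}:=\mathbf 1\otimes\Delta_z^{(m-1)}(J_\mu)$ (left-embedded in $A_z^{\otimes N}$) satisfy the relations \eqref{nonstandrelations}. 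They Poisson-commute with $J_\mu^{(1)}$, and their Casimir is $\cC_z(J^{([2,m])})=C_{[2,m]}=B_m$. Moreover, $\Delta_z(J_\mu)$ evaluated on the pair $(J^{(1)},J^{([2,m])})$ equals $\Delta_z^{(m)}(J_\mu)$, so $\cC_z$ of the two-block coproduct is $A_m=C_{[1,m]}$. This is the hypothesis of the lemma, with $a=b_1$ (the value of the one-site Casimir, or abstractly $C_1$) and $b=B_m$.

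Next I identify $X$ and $T_m$ with the lemma's $X,T$. From \eqref{eq:qSWZj} and \eqref{eq:qSWhzhalf}, $Z_2=h_z(J^{(1)})$ after realization. For $T_m=Z_{m+1}$ I would check that $Z_{m+1}=h_z(\Delta_z^{(m)}(J))$. The factor $e^{zJ_-^{(m)}}$ multiplies the coproduct of $J_+$, whose $i$th term carries $e^{-z\sum_{k<i}q_k^2+z\sum_{i<l\le m}q_l^2}$. The total exponent becomes $2z\sum_{i<k\le m}q_k^2$, matching the exponentials in \eqref{eq:qSWZj}, with the prefactor $\frac{\sinh(zq_i^2)}{zq_i^2}e^{zq_i^2}$ coming from $D_z(J_+)e^{zq_i^2}$. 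The oscillator term likewise matches. At $m=N$ this is the convention $Z_{N+1}=H^{\rm SSW}_{z,N}/2$. This bookkeeping of the exponents is the only step that needs care. It has already been carried out at $m\le 4$, and the general computation is the same telescoping of the exponents.

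Lemma~\ref{lem:qSWtwoblock} then gives, for each $m$, the centrality of $b_1,B_m,T_m$ in the algebra generated by $A_m,X,G_m$, the quadratic relations \eqref{eq:qSW4nested1}--\eqref{eq:qSW4nested2}, and the Casimir value \eqref{eq:qSW4nestedCasimirValue}. These relations are identities in the abstract two-block algebra. They therefore pull back under the Poisson homomorphism to $A_z^{\otimes N}$ and then to any realization.

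Finally, the standard $2N-3$ family consists of the left Casimirs $C_{[1,m]}$ for $m=2,\ldots,N$ and the right Casimirs $C_{[N-m+1,N]}$ for $m=2,\ldots,N-1$. The left ones give $\{X,A_m\}=-G_m$, which is covered above. Each right one with $m\le N-1$ is built only from generators on sites $2,\ldots,N$. These Poisson-commute with every function of $J^{(1)}$, and $X$ is such a function, so the bracket vanishes. Hence these formulas list every direct bracket of $Z_2$ with the coalgebra family.
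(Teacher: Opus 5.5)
Your proposal is correct and follows the paper's proof. You split sites $1,\ldots,m$ into the blocks $\{1\}$ and $[2,m]$, use coassociativity to make $[2,m]$ an exact copy of the coalgebra with Casimir $B_m$, apply Lemma~\ref{lem:qSWtwoblock} with $a=b_1$, $b=B_m$, $T=Z_{m+1}$, and dispose of the right chain by disjoint support; your explicit exponent check that $Z_{m+1}=h_z(J^{([1,m])})$ also fills in an identification the paper only asserts.
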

\begin{proof}
Apply Lemma~\ref{lem:qSWtwoblock} to the consecutive blocks $\{1\}$ and $[2,m]$ for each $m=2,\ldots,N$.  Their Casimirs are $b_1$ and $B_m$, their combined Casimir is $A_m$, and the one-block and combined St\"ackel functions are $Z_2$ and $Z_{m+1}$.  Coassociativity makes the composite block $[2,m]$ another exact copy of the same Poisson coalgebra.  The zero brackets with the right chain follow from disjoint site support.
\end{proof}

There is a second, overlapping chain that uses the whole St\"ackel family rather than keeping $Z_2$ fixed.  Set $A_k=C_{[1,k]}$, $X_k=Z_k$, $T_k=Z_{k+1}$ and $G_k^{\rm S}=\{A_k,X_k\}$, with $Z_5=H^{\rm SSW}_{z,4}/2$.  For $k=2,3,4$ the same equations \eqref{eq:qSW4nested1}--\eqref{eq:qSW4nested2} hold after
\begin{equation}
 (X,T_m,b_1,B_m)
 \longmapsto
 (X_k,T_k,L_{1\cdots k-1},b_k).
 \label{eq:qSW4successivesub}
\end{equation}
Thus the four-dimensional model contains three overlapping St\"ackel--coalgebra quadratic blocks, in addition to the universal four-site algebra of Section~5.  This nested structure is the finite-$N$ shadow of the general $N$-site construction discussed in the companion technical note.

These examples clarify the role of the coalgebra construction.  The $2N-3$ universal coalgebra integrals, together with the Hamiltonian, provide $2N-2$ constants that are not expected to exhaust the symmetry of every special Hamiltonian; rather, they form a common algebraic core.  When an additional hidden integral exists, as for the St\"ackel q-SW system, it can generate further polynomial substructures and complete the system to maximal superintegrability.  In this sense the deformed Racah algebra obtained in Sections~4 and 5 should be regarded as a universal subalgebra inside larger symmetry algebras associated with distinguished Hamiltonians.

\section{Conclusion}
We have studied superintegrable Hamiltonian systems and their symmetry algebras from the $\mathfrak{sl}(2)$ Lie-Poisson coalgebra and its non-standard Poisson deformation.  The main algebraic result is the finite quadratic closure, independent of any chosen realization, of the partial Casimir sector at the first two nontrivial ranks.    The universal closure calculation is performed directly on tensor products of the abstract Poisson algebra, rather than being inferred from one particular symplectic realization.  The resulting closure relations are universal identities for the chosen Poisson coalgebra.  

In the undeformed case, the three-site construction reproduces the rank-one Racah Poisson algebra, and the four-site calculation recovers the corresponding higher-rank structure.  This provides a useful reference for the deformed calculation and makes explicit how non-contiguous subset Casimirs simplify the primitive-coproduct presentation.  For the non-standard deformation the situation changes: the coproduct contains exponential factors and the same minimal-site basis is no longer available in a direct way.  At three sites the partial Casimirs and one bracket generator form a quadratic Poisson algebra, and at four sites the five connected partial Casimirs together with four bracket generators again close quadratically.

The finite quadratic closure is not a consequence of polynomial dependence of the elementary generators; the elementary expressions are transcendental in the Poisson variables.  It is instead a structural property of the partial Casimirs themselves, and every Poisson realization inherits the same algebra.  In the $N=3$ and $N=4$ calculations an additional simplification occurs: the final polynomial closure coefficients have no explicit $z$ dependence.  The deformation therefore changes the realization of the generators while leaving these displayed structure coefficients parameter-independent; at $N=3$ the fixed normalization~\cref{eq:nonstandard-normalization-limit,eq:nonstandard-CF-limit} reduces the relations exactly to the undeformed Racah form.  The canonical realization reviewed in Section~6 gives explicit Hamiltonians on spaces of non-constant curvature and provides concrete representatives of all generators.  Appendices~A and B make the realized formulae sufficiently explicit for independent symbolic checking.  The St\"ackel q-deformed Smorodinsky-Winternitz example makes the distinction between universal and additional symmetry explicit.  The coalgebra sector gives the quasi-maximal family common to the whole class, whereas the additional St\"ackel integral completes the system to maximal superintegrability.  Together with the partial Casimir $L_{12}$, this additional integral generates the closed quadratic algebra \,\eqref{eq:qSWclosure1}--\eqref{eq:qSWclosure2}, with Casimir relation \,\eqref{eq:qSWCasimirLeft}--\eqref{eq:qSWCasimirValue}.  At $N=4$ the same missing integral $Z_2$ has a complete quadratic coupling to the five standard coalgebra integrals: it Poisson-commutes with the two right integrals and produces three nested Casimir-bearing quadratic blocks with $L_{12}$, $L_{123}$ and $L_{1234}$.  The universal deformed Racah algebra therefore appears as one part of a larger, hierarchically organized symmetry structure specific to the St\"ackel model.

Several questions remain open.  The first is the extension of the universal closure calculation to arbitrary numbers of sites and the identification of a minimal but structurally transparent generating set at higher rank.  A second question is the complete closure of the union of the full connected-interval deformed Racah sector, including its bracket generators, with the St\"ackel sector for the maximally superintegrable q-Smorodinsky-Winternitz model.  The direct coupling of the missing integral to the standard coalgebra family is now explicit at $N=4$ and extends to arbitrary $N$ by Lemma~\ref{lem:qSWtwoblock} together with coassociativity, but the closure of the larger union containing every universal bracket generator remains a separate problem.  It would also be useful to determine the Casimir relations and symplectic leaves of these enlarged algebras and to compare them with the undeformed $SW(N)$ structures.  Comparison with standard Drinfeld--Jimbo Poisson and quantum deformations may clarify how the present non-standard deformation is related to braided intermediate Casimirs and Askey--Wilson-type structures.

\section*{Acknowledgment}

\noindent IM was supported by the Australian Research Council Future Fellowship FT180100099. 

\section*{Data availability}

\noindent The manuscript has no associated data.

\appendix
\section[Explicit higher-site partial Casimirs for N=4]{Explicit higher-site partial Casimirs for $N=4$}
\label{app:higher-casimirs}

    The pair kernel $I^z_{ij}$ in Eq.~\eqref{eq:pairkernel} gives the higher-site partial Casimirs in a form that is both explicit and readable:
\begin{align}
 L_{123}={}&
 b_1e^{2z(q_2^2+q_3^2)}
 +b_2e^{-2z(q_1^2-q_3^2)}
 +b_3e^{-2z(q_1^2+q_2^2)}\notag\\
 &+I^z_{12}e^{-z(q_1^2-q_2^2-2q_3^2)}
 +I^z_{13}e^{-z(q_1^2-q_3^2)}
 +I^z_{23}e^{-z(2q_1^2+q_2^2-q_3^2)},
 \label{eq:L123compact}
\end{align}
\begin{align}
 L_{234}={}&
 b_2e^{2z(q_3^2+q_4^2)}
 +b_3e^{-2z(q_2^2-q_4^2)}
 +b_4e^{-2z(q_2^2+q_3^2)}\notag\\
 &+I^z_{23}e^{-z(q_2^2-q_3^2-2q_4^2)}
 +I^z_{24}e^{-z(q_2^2-q_4^2)}
 +I^z_{34}e^{-z(2q_2^2+q_3^2-q_4^2)},
 \label{eq:L234compact}
\end{align}
\begin{align}
 L_{1234}={}&
 b_1e^{2z(q_2^2+q_3^2+q_4^2)}
 +b_2e^{-2z(q_1^2-q_3^2-q_4^2)}+b_3e^{-2z(q_1^2+q_2^2-q_4^2)}
 +b_4e^{-2z(q_1^2+q_2^2+q_3^2)}\notag\\
 &+I^z_{12}e^{-z(q_1^2-q_2^2-2q_3^2-2q_4^2)}
 +I^z_{13}e^{-z(q_1^2-q_3^2-2q_4^2)}+I^z_{14}e^{-z(q_1^2-q_4^2)}\notag\\
 &+I^z_{23}e^{-z(2q_1^2+q_2^2-q_3^2-2q_4^2)}
 +I^z_{24}e^{-z(2q_1^2+q_2^2-q_4^2)}
 +I^z_{34}e^{-z(2q_1^2+2q_2^2+q_3^2-q_4^2)}.
 \label{eq:L1234compact}
\end{align}
Equations~\eqref{eq:L123compact}--\eqref{eq:L1234compact} are exactly equivalent to the fully expanded expressions, but the repeated phase-space dependence is isolated once and for all in $I^z_{ij}$. This form is better suited to verification, comparison and subsequent differentiation.

\subsection{Exact phase-space form of the four bracket generators}
The four bracket generators are most compactly written directly as canonical Poisson brackets of the realized partial Casimirs.  This form is exact, avoids unnecessary repeated exponential factors, and can be expanded algebraically when required.

With the convention~\eqref{canonicalPBcorrected} and with the explicit functions $L_A=D_z(C_A)$ above,
\begin{align}
 \cF_{12,23}&=\{L_{12},L_{23}\}
 =\frac{\partial L_{12}}{\partial q_2}\frac{\partial L_{23}}{\partial p_2}
 -\frac{\partial L_{12}}{\partial p_2}\frac{\partial L_{23}}{\partial q_2},\\
 \cF_{23,34}&=\{L_{23},L_{34}\}
 =\frac{\partial L_{23}}{\partial q_3}\frac{\partial L_{34}}{\partial p_3}
 -\frac{\partial L_{23}}{\partial p_3}\frac{\partial L_{34}}{\partial q_3},\\
 \cF_{12,234}&=\{L_{12},L_{234}\}
 =\frac{\partial L_{12}}{\partial q_2}\frac{\partial L_{234}}{\partial p_2}
 -\frac{\partial L_{12}}{\partial p_2}\frac{\partial L_{234}}{\partial q_2},\\
 \cF_{34,123}&=\{L_{34},L_{123}\}
 =\frac{\partial L_{34}}{\partial q_3}\frac{\partial L_{123}}{\partial p_3}
 -\frac{\partial L_{34}}{\partial p_3}\frac{\partial L_{123}}{\partial q_3}.
 \label{eq:exactrealizedF}
\end{align}
Only the common site contributes in each bracket, which explains the one-term derivative formulae.  These expressions are exact, avoid a fragile multi-page expansion, and reproduce the abstract closure relations after the substitution $C_i=b_i$ and $C_A=L_A$.

\section{Factorized phase-space form of the bracket generators}
\label{app:factorized-brackets}

The compact expressions in Appendix~A are preferable for the main calculations, but a phase-space formula in which every derivative has been evaluated is also useful.  This appendix gives the four bracket generators in factorized form.  The result is algebraically equivalent to a fully expanded expression, while making the common structure of the four generators visible and reducing the risk of transcription errors.

Introduce the abbreviations
\begin{equation}
 S_i:=\sinh(zq_i^2),\qquad R_i:=\cosh(zq_i^2),\qquad
 A_i:=\frac{S_i}{zq_i^2},\qquad
 \Lambda_{ij}:=q_ip_j-q_jp_i,
\end{equation}
and
\begin{equation}
 E_{ij}:=e^{-z(q_i^2-q_j^2)}.
\end{equation}
Then the pair kernel can be written as
\begin{equation}
 I^z_{ij}=A_iA_j\Lambda_{ij}^2+b_i\frac{S_j}{S_i}+b_j\frac{S_i}{S_j}.
\end{equation}
The elementary derivatives required below are
\begin{align}
 \partial_{p_i}I^z_{ij}&=-2q_jA_iA_j\Lambda_{ij},
 \qquad \partial_{p_j}I^z_{ij}=2q_iA_iA_j\Lambda_{ij},\label{eq:Ipderivatives}\\
 \partial_{q_i}I^z_{ij}&=
 A_j\left(\frac{2R_i}{q_i}-\frac{2S_i}{zq_i^3}\right)\Lambda_{ij}^2
 +2A_iA_jp_j\Lambda_{ij}
 -2zb_iq_i\frac{S_jR_i}{S_i^2}
 +2zb_jq_i\frac{R_i}{S_j},\label{eq:Iqiderivative}\\
 \partial_{q_j}I^z_{ij}&=
 A_i\left(\frac{2R_j}{q_j}-\frac{2S_j}{zq_j^3}\right)\Lambda_{ij}^2
 -2A_iA_jp_i\Lambda_{ij}
 +2zb_iq_j\frac{R_j}{S_i}
 -2zb_jq_j\frac{S_iR_j}{S_j^2}.\label{eq:Iqjderivative}
\end{align}
These formulae are obtained by direct differentiation of Eq.~\eqref{eq:pairkernel}.

For a two-site partial Casimir
\begin{equation}
 L_{ij}=b_i e^{2zq_j^2}+b_j e^{-2zq_i^2}+I^z_{ij}E_{ij},
\end{equation}
the derivatives with respect to its two sites are
\begin{align}
 P^{(i)}_{ij}&:=\partial_{p_i}L_{ij}=E_{ij}\partial_{p_i}I^z_{ij},\qquad
 P^{(j)}_{ij}:=\partial_{p_j}L_{ij}=E_{ij}\partial_{p_j}I^z_{ij},\label{eq:Pijexplicit}\\
 Q^{(i)}_{ij}&:=\partial_{q_i}L_{ij}
 =-4zq_i b_j e^{-2zq_i^2}
 +E_{ij}\left(\partial_{q_i}I^z_{ij}-2zq_iI^z_{ij}\right),\label{eq:Qijiexplicit}\\
 Q^{(j)}_{ij}&:=\partial_{q_j}L_{ij}
 =4zq_j b_i e^{2zq_j^2}
 +E_{ij}\left(\partial_{q_j}I^z_{ij}+2zq_jI^z_{ij}\right).\label{eq:Qijjexplicit}
\end{align}
Substitution of Eqs.~\eqref{eq:Ipderivatives}--\eqref{eq:Iqjderivative} makes the right-hand sides functions only of $(q_i,p_i,q_j,p_j)$, $z$ and $b_i,b_j$.

The first two bracket generators are
\begin{align}
 \cF_{12,23}&=Q^{(2)}_{12}P^{(2)}_{23}-P^{(2)}_{12}Q^{(2)}_{23},\label{eq:F1223factorized}\\
 \cF_{23,34}&=Q^{(3)}_{23}P^{(3)}_{34}-P^{(3)}_{23}Q^{(3)}_{34}.\label{eq:F2334factorized}
\end{align}
No derivatives remain in these expressions because all $P$ and $Q$ blocks are given explicitly above.

For the three-site quantities define
\begin{align}
 E^{(234)}_{23}&:=e^{-z(q_2^2-q_3^2-2q_4^2)},&
 E^{(234)}_{24}&:=e^{-z(q_2^2-q_4^2)},&
 E^{(234)}_{34}&:=e^{-z(2q_2^2+q_3^2-q_4^2)},\\
 E^{(123)}_{12}&:=e^{-z(q_1^2-q_2^2-2q_3^2)},&
 E^{(123)}_{13}&:=e^{-z(q_1^2-q_3^2)},&
 E^{(123)}_{23}&:=e^{-z(2q_1^2+q_2^2-q_3^2)}.
\end{align}
Using Appendix~A one obtains the explicit derivatives
\begin{align}
 P^{(2)}_{234}:=\partial_{p_2}L_{234}
 &=E^{(234)}_{23}\partial_{p_2}I^z_{23}
 +E^{(234)}_{24}\partial_{p_2}I^z_{24},\label{eq:P234explicit}\\
 Q^{(2)}_{234}:=\partial_{q_2}L_{234}
 &=-4zq_2b_3e^{-2z(q_2^2-q_4^2)}
-4zq_2b_4e^{-2z(q_2^2+q_3^2)}+E^{(234)}_{23}\left(\partial_{q_2}I^z_{23}-2zq_2I^z_{23}\right)\nonumber\\
 &\quad
+E^{(234)}_{24}\left(\partial_{q_2}I^z_{24}-2zq_2I^z_{24}\right)-4zq_2E^{(234)}_{34}I^z_{34},\label{eq:Q234explicit}\\
 P^{(3)}_{123}:=\partial_{p_3}L_{123}
 &=E^{(123)}_{13}\partial_{p_3}I^z_{13}
 +E^{(123)}_{23}\partial_{p_3}I^z_{23},\label{eq:P123explicit}\\
 Q^{(3)}_{123}:=\partial_{q_3}L_{123}
 &=4zq_3b_1e^{2z(q_2^2+q_3^2)}
   +4zq_3b_2e^{-2z(q_1^2-q_3^2)}+4zq_3E^{(123)}_{12}I^z_{12}\nonumber\\
 &\quad +E^{(123)}_{13}\left(\partial_{q_3}I^z_{13}+2zq_3I^z_{13}\right)+E^{(123)}_{23}\left(\partial_{q_3}I^z_{23}+2zq_3I^z_{23}\right).\label{eq:Q123explicit}
\end{align}
Again, the derivatives of $I^z_{ij}$ appearing on the right are the explicit expressions in Eqs.~\eqref{eq:Ipderivatives}--\eqref{eq:Iqjderivative}.  The remaining two bracket generators are
\begin{align}
 \cF_{12,234}&=Q^{(2)}_{12}P^{(2)}_{234}-P^{(2)}_{12}Q^{(2)}_{234},\label{eq:F12234factorized}\\
 \cF_{34,123}&=Q^{(3)}_{34}P^{(3)}_{123}-P^{(3)}_{34}Q^{(3)}_{123}.\label{eq:F34123factorized}
\end{align}
Equations~\eqref{eq:F1223factorized}, \eqref{eq:F2334factorized}, \eqref{eq:F12234factorized} and \eqref{eq:F34123factorized} are the fully explicit factorized phase-space expressions used for the realized four-site algebra.  They are preferable to a several-page fully expanded transcription because the latter obscures the common kernel structure.

\end{document}